\titlespacing*{\section}{0pt}{*2}{*1}
\titlespacing*{\subsection}{0pt}{*2}{*1} 
\setlist{noitemsep, topsep=0pt} 
\definecolor{LinkBlue}{rgb}{.15, .25, .85} 
\providecommand{\algorithmname}{Algorithm}
\newtheorem{theorem}{Theorem}
\newtheorem{lemma}{Lemma}
\newtheorem{remark}{Remark}
\newtheorem{prop}{Proposition}
\def \bP {\mathbb{P}}
\def \bE {\mathbb{E}}
\newcommand{\prob}[1]{\mathbb{P}\left[#1\right]}
\newcommand{\Ccen}{C_{\mathsf{Cen},1+\delta}}
\newcommand{\Clip}{\mathsf{Clip}}
\newcommand{\Qmc}{\mathcal{A}_{\mathsf{QMC}}}
\newcommand{\Cmc}{\mathcal{A}_{\mathsf{CMC}}}
\newcommand{\calA}{{\mathcal{A}}}
\newcommand{\calN}{{\mathcal{N}}}
\newcommand{\calO}{{\mathcal{O}}}
\newcommand{\calP}{{\mathcal{P}}}
\begin{document}

	\title{Quadratic Speed-up in Infinite Variance Quantum Monte Carlo}
 \author{Jose Blanchet,  Mario Szegedy, Guanyang Wang}
 \maketitle
\def\spacingset#1{\renewcommand{\baselinestretch}%
{#1}\small\normalsize} \spacingset{1}
\spacingset{1.2} 

\begin{abstract}
    In this study, we give an extension of Montanaro's  \cite{montanaro2015quantum}
   quantum Monte Carlo method, tailored for computing expected values of random variables that exhibit infinite variance. This addresses a challenge in analyzing heavy-tailed distributions, which are commonly encountered in various scientific and engineering fields. Our quantum algorithm efficiently estimates means for variables with a finite $(1+\delta)^{\text{th}}$ moment, where $\delta$ lies between 0 and 1. It provides a quadratic speedup over the classical Monte Carlo method in both the accuracy parameter $\epsilon$ and the specified moment of the distribution. We establish both classical and quantum lower bounds, showcasing the near-optimal efficiency of our algorithm among quantum methods. Our work focuses not on creating new algorithms, but on analyzing the execution of existing algorithms with available additional information about the random variable. Additionally, we categorize these scenarios and demonstrate a hierarchy in the types of supplementary information that can be provided.
\end{abstract}

\section{Introduction}
Computing expected values of random variables is important in many scientific and engineering fields. Classical Monte Carlo methods are widely used to approximate these values, assuming the random variable has a finite variance. However, in certain applied settings with heavy-tailed variables, infinite variance models may be more appropriate. These models are useful when the largest fluctuations in data grow faster than the square root of the number of observations collected. This can depend on the time scales of data collection and decision-making. For example, in finance, infinite variance models may be more suitable for short time scales (e.g., days), while finite variance models may be better for longer time scales (e.g., months); see \cite{gencay2001}. The work of \cite{GS2010} provides a multi-scale approach to build models that explain these empirical findings.

Heavy-tailed phenomena can explain empirical findings in internet traffic, earth sciences, and other areas, leading to long-range dependence and persistent correlations in time series data. In the context of internet traffic, a server experiences packet requests of users navigating the internet at a high frequency, probably in the order of seconds or even faster, whereas users consume information requested in the order of many minutes or perhaps even hours. This creates on-and-off interactions between the user and the server which relative to the server's time scale of operation are better modeled as very heavy-tailed random times \cite{taqqu1995estimators}. These models were pioneered by Mandelbrot \cite{mandelbrot1963} and have been studied in various fields; see \cite{pipiras_taqqu_2017} and \cite{Watkins_2019}.
Heavy-tailed distributions (potentially with infinite variance), however, can also pose challenges for statistical estimation, leading to inaccurate predictions and risk assessments. Consequently, accurately computing expected values for such distributions is a critical practical necessity.

Quantum computing offers potential speed-ups for various computational problems, including Monte Carlo methods. The work by \cite{montanaro2015quantum} introduced a quantum Monte Carlo algorithm with quadratic speed-up over classical methods for estimating the expected value of a random variable with finite variance, which generalizes the algorithm of \cite{heinrich2003monte} for mean estimation under the uniform distribution. Montanaro's algorithm relies on phase estimation 
\cite{brassard1998quantum, brassard2002quantum} which is in turn
grounded in Grover's algorithm \cite{grover1996fast}. It has also been applied in quantum Markov Chain Monte Carlo \cite{childs2022quantum} and computational finance \cite{bouland2020prospects}. 
Hamoudi and Magniez \cite{hamoudi2019quantum} improve Montanaro's algorithm in terms of relative error, assuming that an upper and lower bound for the expected value is known to the users. The thesis of Hamoudi \cite{Hamoudi2021Quantum} also improves Montanaro's algorithm regarding additive errors, eliminating the need for a pre-determined upper bound on the variance. Recently, Kothari and O'Donnell \cite{kothari2023mean} further improved existing works, which we utilize as a black box in our study.

In this study, we extend existing methods to estimate expected values for random variables with infinite variance but finite $(1+\delta)^{th}$ moment, where $\delta \in (0,1]$. Our algorithm exhibits a quadratic speed-up in relation to both the accuracy parameter $\epsilon$ and the $(1+\delta)^{th}$ moment of the distribution. It achieves $\epsilon$-Mean absolute error (MAE), a stronger metric than the `large probability guarantee' used by most existing quantum algorithms.

We provide both classical and quantum lower bounds, demonstrating that our quantum algorithm nearly quadratically surpasses all classical algorithms, not just traditional Monte Carlo methods, and is nearly optimal among all quantum algorithms with respect to both $\epsilon$ and the $(1+\delta)^{th}$ moment. Our lower bounds are established by reducing Grover-type search problems to our mean estimation problems.

\section{Quantum Monte Carlo algorithm}\label{sec:QMC}
The canonical problem in Monte Carlo is that of computing $\bE[X]$. Its current quantized versions
\cite{montanaro2015quantum, Hamoudi2021Quantum, kothari2023mean} focus on estimating the expectation of a random variable with finite variance. We show how to use similar ideas to these results (and in particular to \cite{montanaro2015quantum, kothari2023mean}), when the random variable has an infinite variance. We do need some hypothesis:  we will assume that there is a $0<\delta\le 1$ and a known $C$ such that $\bE[\lvert v(\calA)\rvert^{1+\delta}]:= C_{1+\delta}\leq C < \infty$. 

We aim to prove the following:
\begin{theorem}\label{thm:quantum-heavy}
    There is a procedure that, given any (classical or quantum) algorithm $\mathcal A$ which outputs a random variable $v(\mathcal A)$ satisfying $ \bE[\lvert v(\calA)\rvert^{1+\delta}] =  C_{1+\delta} < \infty$  for some $\delta \in (0,1]$,
    and an upper bound $C$ on $C_{1+\delta}$, constructs a quantum algorithm which
    estimates $\bE[v(\calA)]$ up to additive error $\epsilon$ with success probability at least $4/5$ by using $\calO\left(C^{1/2\delta} \epsilon^{-(1+\delta)/2\delta}\right)$ samples (i.e. applications of
    a quantum-adapted version of $\calA$, as in \cite{kothari2023mean}). The hidden factor 
    within $\calO$ depends on $\delta$. 
\end{theorem}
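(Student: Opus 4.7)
The plan is to reduce the infinite-variance problem to a finite-variance one by a classical truncation, and then invoke the Kothari--O'Donnell quantum mean estimator as a black box. Writing $X := v(\calA)$, fix a threshold $T>0$ (to be chosen) and consider the truncated variable $Y := X\cdot \mathbf{1}_{\{|X|\le T\}}$. Two elementary bounds follow from $\bE[|X|^{1+\delta}]\le C$: using $|X|\le T^{-\delta}|X|^{1+\delta}$ on $\{|X|>T\}$,
$$|\bE[X]-\bE[Y]|\;\le\;\bE\bigl[|X|\mathbf{1}_{\{|X|>T\}}\bigr]\;\le\; C\,T^{-\delta},$$
and using $X^2\le T^{1-\delta}|X|^{1+\delta}$ on $\{|X|\le T\}$,
$$\var(Y)\;\le\;\bE[Y^2]\;\le\; C\,T^{1-\delta}.$$

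Next I would build the quantum oracle for $Y$ from the quantum-adapted version of $\calA$ supplied by \cite{kothari2023mean}: compute $X$ reversibly into an output register, compare $|X|$ with $T$, and conditionally zero the output register, which yields an oracle for $Y$ at the cost of one invocation of the oracle for $X$. I would then feed this oracle into the Kothari--O'Donnell mean estimator with target accuracy $\epsilon/2$ and the variance upper bound $\sigma^2\le CT^{1-\delta}$; this returns $\hat\mu$ with $|\hat\mu-\bE[Y]|\le\epsilon/2$ with probability at least $4/5$, using $N=\calO(\sigma/\epsilon)=\calO\bigl(\sqrt{C T^{1-\delta}}/\epsilon\bigr)$ queries to $\calA$.

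Finally I would balance the two sources of error by choosing $T$ so that the truncation bias $C T^{-\delta}$ is at most $\epsilon/2$, i.e.\ $T\asymp (C/\epsilon)^{1/\delta}$. Plugging into the query count,
$$N\;=\;\calO\!\left(\frac{\sqrt{C}\,T^{(1-\delta)/2}}{\epsilon}\right)\;=\;\calO\!\left(C^{1/(2\delta)}\,\epsilon^{-(1+\delta)/(2\delta)}\right),$$
since the exponent of $C$ is $\tfrac{1}{2}+\tfrac{1-\delta}{2\delta}=\tfrac{1}{2\delta}$ and the exponent of $\epsilon$ is $-1-\tfrac{1-\delta}{2\delta}=-\tfrac{1+\delta}{2\delta}$. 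Combined with the triangle inequality $|\hat\mu-\bE[X]|\le|\hat\mu-\bE[Y]|+|\bE[Y]-\bE[X]|\le\epsilon$, this delivers the claim with constants depending on $\delta$.

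The main obstacle is conceptual rather than computational: being precise about the quantum model so that the truncated oracle genuinely costs one call to $\calA$ and so that the Kothari--O'Donnell estimator's guarantee can be invoked with a \emph{supplied upper bound} on $\var(Y)$ rather than its exact value (otherwise the algorithm would need a pre-estimation subroutine). A minor secondary subtlety, relevant for the MAE strengthening advertised in the abstract but not required by the theorem as stated, is to clip the estimator's output to an interval of radius comparable to $T$ so that rare failure events contribute $\calO(\epsilon)$ to the absolute error; this can be achieved by a standard median-of-$\calO(1)$-repetitions argument on top of the $4/5$ success guarantee.
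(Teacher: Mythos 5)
Your proposal is correct and follows essentially the same route as the paper: truncate at a threshold $T\asymp(C/\epsilon)^{1/\delta}$, bound the truncation bias by $CT^{-\delta}$ and the second moment of the truncated variable by $CT^{1-\delta}$, and feed the truncated oracle into the Kothari--O'Donnell estimator, yielding the identical exponent arithmetic. The only (cosmetic) differences are that you truncate $|X|$ in a single step where the paper handles the positive and negative parts separately, and you derive the tail bound from the pointwise inequality $|X|\le T^{-\delta}|X|^{1+\delta}$ rather than via H\"older plus Markov.
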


\begin{remark}
As in all similar studies, we primarily focus on how the sample complexity grows in relation to 
the desired additive error $\epsilon$. Here we treat
the sample complexity as a two-variate function of both $\epsilon$ and $C$.
This allows investigation of cases like $C=\epsilon$, but our formula is also valid when $C$ is very large (which is the more typical case). 
Further, our proof will imply (although we do not state it explicitly in the theorem), that when $C^{1/2\delta} \epsilon^{-(1+\delta)/2\delta} = o(1)$, i.e. when $C = o(\epsilon^{1+\delta})$,
the constant 0 estimates the additive error to accuracy $\epsilon$, and so the sample complexity is 0.
\end{remark}

The result above establishes a quadratic speedup when compared to the classical or `vanilla' Monte Carlo algorithm, which is optimal within a constant factor, and operates by simply outputting the average of the samples it generated.

\section{The Model}

Our paper aims to analyze the sample complexity of estimating the mean of a random variable $X$ in the quantum setting, when we have additional information about $X$, namely its $(1+\delta)^{\text{th}}$ moment. We build upon the work of \cite{montanaro2015quantum} and \cite{kothari2023mean}, and in fact, we use their results in a black box manner. Like in these articles, we must pay special attention to the model. In this section we clarify:
\begin{enumerate}
    \item How does our algorithm access $X$?
    \item How do we interpret the infinite range of $X$, since quantum computers are inherently finite?
\end{enumerate}
To answer the first question: We adopt the model presented in \cite{kothari2023mean}. Classical sampling theory is based on an oracle model, in which a "button" is pushed to obtain a random sample of the variable. We can think of any random variable of interest as $X(u)$, where $u \in [0,1]$. When the button is pushed, the oracle (or `automaton') generates a uniformly random $u \in [0,1]$ and provides us with the corresponding value $X(u)$. The complexity of an algorithm is measured by the number of times the button is pushed.

In practice, the oracle is implemented by an algorithm ${\cal A}$, which returns a value $v({\cal A}) \sim X$. When introducing quantum concepts into this model, we transform ${\cal A}$ into a quantum operator (using well-known methods), which can be applied on a superposition of samples. The quantum sampling complexity is now defined as the number of applications of this operator within the sampling algorithm, which is now a quantum algorithm.

\medskip

\noindent{\bf The Code of Kothari and O'Donnell \cite{kothari2023mean}.} We call
${\cal A}$ `the code,' which is either an input-less 
randomized circuit or a quantum circuit with input $|0^n\rangle$ and a final measurement of $k$ qubits
that returns the random variable $X = v({\cal A}) \in \mathbb{R}$.
While Montanaro also assumes the availability of the code for ${\calA}$ and ${\calA}^{-1}$,
he modifies it less than \cite{kothari2023mean}.  Kothari and O'Donnell post-process 
${\cal A}$ to turn it into the phase shift operator $e^{-2 i \arctan v({\cal A})}$.
With this definition Kothari and O'Donnell prove:

\begin{theorem}[\cite{kothari2023mean}]\label{thm:KO}
There exists a quantum sampling algorithm (with classical control) that, for any code ${\cal A}$ and any $n \in \mathbb{N}$, applies the Kothari-O'Donnell operator as a black box $\calO(n)$ times, and estimates the expected value $\bE[v(\calA)]$ with additive error $\sigma(v(\calA))/n$ and success probability at least $0.9$.
\end{theorem}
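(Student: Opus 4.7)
The plan is to apply a variance-aware form of complex amplitude estimation to the Kothari--O'Donnell phase operator $U_{\calA}=e^{-2i\arctan v(\calA)}$ (after a rescaling) and to extract $\bE[v(\calA)]$ from the imaginary part of the output. The decisive feature of $U_{\calA}$ is that the phase $-2\arctan v(\calA)$ lies in $(-\pi,\pi)$ regardless of how heavy-tailed $v(\calA)$ is, so amplitude-estimation techniques remain well defined even without any a priori bound on $v(\calA)$.

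The analytic backbone is the identity $e^{-2i\arctan t}=(1-t^{2}-2it)/(1+t^{2})$, which gives
\[
\bE\!\qth{e^{-2i\arctan v(\calA)}} \;=\; \bE\!\qth{\tfrac{1-v(\calA)^{2}}{1+v(\calA)^{2}}} \;-\; 2i\,\bE\!\qth{\tfrac{v(\calA)}{1+v(\calA)^{2}}}.
\]
Writing $\tfrac{x}{1+x^{2}}=x-\tfrac{x^{3}}{1+x^{2}}$ and using the elementary bound $\lvert x^{3}/(1+x^{2})\rvert\le x^{2}$ shows that the imaginary part is a biased estimator of $\bE[v(\calA)]$ with bias at most $\bE[v(\calA)^{2}]$. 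Applied to the rescaled variable $v(\calA)/M$, the bias becomes $O(\sigma^{2}/M^{2})$, which scales to $O(\sigma^{2}/M)$ on $\bE[v(\calA)]$ after undoing the rescaling.

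The quantum subroutine I would deploy is the following. Using $\calA$ and its inverse, build the unitary $V = \calA\,D_{M}\,\calA^{-1}$, where $D_{M}$ is the diagonal phase $e^{-2i\arctan(\,\cdot\,/M)}$ on the output register, so that $\langle 0|V|0\rangle = \bE\qth{e^{-2i\arctan(v(\calA)/M)}}=:z$. Apply a variance-aware block-encoded amplitude-estimation primitive to $V$ with $n$ queries; this estimates $z$ to error $O(\sqrt{1-\lvert z\rvert^{2}}/n)$. Since $1-\lvert z\rvert^{2}=O(\bE[(v(\calA)/M)^{2}])=O(\sigma^{2}/M^{2})$ when the second moment of the rescaled variable is small, the error on $\mathrm{Im}(z)$ is $O(\sigma/(Mn))$. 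Dividing by $-2$ and multiplying by $M$ produces an estimator for $\bE\qth{v(\calA)M^{2}/(M^{2}+v(\calA)^{2})}$ with amplitude-estimation error $O(\sigma/n)$ and bias $O(\sigma^{2}/M)$. Choosing $M=\Theta(\sigma n)$ balances the two at the claimed $O(\sigma/n)$. The native constant success probability of amplitude estimation is then boosted to $0.9$ by taking the median of $O(1)$ independent runs.

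The main obstacle is that $\sigma$ is unknown a priori, yet it appears in both the rescaling parameter $M$ and the claimed accuracy $\sigma/n$. I would resolve this by an adaptive wrapper: first run the same inner subroutine with a sub-budget of queries on the code that outputs $v(\calA)^{2}$ to obtain a constant-factor estimate $\hat\sigma^{2}$ of $\bE[v(\calA)^{2}]$, then substitute $\hat\sigma$ into the rescaling and run the main procedure with the remaining $O(n)$ queries. The careful accounting needed to show that this bootstrapping preserves both the $O(n)$ total query count and the $0.9$ success probability, without circular dependence on unknown moments, is the most delicate piece of the argument; the rest is the bias/variance accounting above combined with the standard median-boosting lemma.
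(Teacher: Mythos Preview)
The paper does not prove this theorem. It is stated as a cited result from \cite{kothari2023mean} and is used throughout the paper strictly as a black box; the surrounding text reads ``With this definition Kothari and O'Donnell prove:'' and then quotes the statement. There is therefore no proof in the paper to compare your attempt against.

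That said, a brief comment on the sketch itself: the outline you give is broadly faithful to the Kothari--O'Donnell strategy (phase oracle $e^{-2i\arctan v(\calA)}$, extract the mean from the imaginary part after rescaling, control bias by the second moment, estimate the expected phase via a variance-aware amplitude-estimation primitive, and bootstrap an estimate of $\sigma$). Two places where your write-up is loose: first, the inequality $1-\lvert z\rvert^{2}=O(\bE[(v(\calA)/M)^{2}])$ does not follow immediately from what you wrote, and the actual argument needs a direct second-moment bound on $1-e^{-2i\arctan t}$ rather than on $z$ alone; second, your proposed bootstrap (``run the same inner subroutine \ldots on the code that outputs $v(\calA)^{2}$'') is circular as stated, since estimating $\bE[v(\calA)^{2}]$ to constant relative accuracy with the very routine whose accuracy depends on the unknown standard deviation of $v(\calA)^{2}$ presupposes control of a fourth moment. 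The actual Kothari--O'Donnell wrapper avoids this by a geometric search over scales rather than by directly estimating $\sigma^{2}$. These are genuine gaps if you intend this as a self-contained proof, but they are moot for the purposes of this paper, which simply invokes the result.
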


\subsection{Infinite Random Variables}

When we have a random variable $X$ that does not have a bounded support,
there are different 
ways to think about it
in an algorithmic framework.

\noindent{\bf An infinite ${\cal A}$.} There exist classical algorithms representing infinite random variables. Consider an experiment in which we perform consecutive coin flips. After the first instance of `heads' appears following $k$ occurrences of `tails,' we output $2^{\alpha k}$ (where $\alpha < 1$). The output, $v(\calA)$, possesses an infinite variance as long as $\alpha > 1/2$. However, it has a finite mean and a finite $(1+\delta)^{\rm th}$ moment for any $\delta < \frac{1}{\alpha} - 1$. In attempts to quantize such a model, we encounter challenges due to the variable-time nature of ${\cal A}$.

\noindent{\bf A family ${\cal A}_n$.} In this case we truncate 
$X$ by multiplying it with the indicator function of $X\le n$. The thus-obtained random variable, $X_n$ is finite, and we quantize ${\cal A}_n$ that implements 
$X_n$. We can study the parameters of this quantization as $n$ tends to infinity.

\noindent{\bf A finite ${\cal A}$ 
with a large
$\bE[\lvert v(\calA)\rvert^{2}]$
but a small
$\bE[\lvert v(\calA)\rvert^{1+\delta}]$.}
Here our attitude is that the expression 
`heavy tail' should be taken with a grain of salt, since in reality no 
random variable is infinite, 
and what in practice `heavy tail' really means is that $\bE[X^{2}]$ is very large, but $\bE[X^{1+\delta}]$ is relatively small. It is sufficient to say that we estimate the running time of the quantization in terms of 
the parameter $\bE[\lvert X\rvert ^{1+\delta}]$, regardless of $\bE[X^{2}]$.

The type of sampling problem we address in this article involves a given target additive error $\epsilon$ for the mean estimation problem. \emph{We must assume that the algorithm initially receives an upper bound $C$ on $\mathbb{E}[\lvert X\rvert ^{1+\delta}]$ as in Theorem \ref{thm:quantum-heavy}.} Theorem \ref{thm:quantum-heavy} could not 
hold with any finite bound on the sample size if 
$C$ were not given as part of the input. Our algorithm then proceeds by determining an appropriate cutoff value, depending on both $\epsilon$ and $\mathbb{E}[\lvert X\rvert^{1+\delta}]$. After the application of this cutoff, the truncated variable becomes finite, rendering the distinction between the three aforementioned methods of thinking about an infinite variable irrelevant.

\section{Background on moments}

The sample complexity of the vanilla Monte Carlo method with finite variance can be assessed using the classical Central Limit Theorem; see, for example, \cite{Asmussen2007StochasticSA}. The analog result to the standard Central Limit Theorem can also be obtained in the infinite variance case, assuming power-law-type distributions; see \cite{Sato1999LvyPA}. More generally, assuming only that $\calA$ has infinite variance
(but when $\bE[\lvert v(\calA)\rvert^{1+\delta}] < \infty$) the following result by Marcinkiewicz-Zygmund can be used to obtain a rate of convergence bound for the empirical mean of centered i.i.d. random variables. We let $p=1+\delta$.

\begin{theorem}[Marcinkiewicz-Zygmund generalized law of large numbers (\cite{gut2005probability}, pg 311)]\label{thm:MZ-LLN} 
Let $1 \leq p < 2$.
Suppose that $ X_1, X_2,...$ are i.i.d. random variables, and $S_n = \sum_{k=1}^n X_k$, $n \geq 1$. If  $\bE|X|^p < \infty$ and $\bE [X] = 0$, then
\[
\bE 
\left| 
\frac{S_n}{n^{1/p}}
\right|^p
=
\bE 
\frac{|S_n|^p}{n}
\stackrel{n \to \infty}{\longrightarrow}
0.
\]
\end{theorem}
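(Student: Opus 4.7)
The plan is to prove the statement by a classical truncation-and-centering argument. Fix $\epsilon > 0$. Since $\bE|X|^p < \infty$, dominated convergence yields $M = M(\epsilon) > 0$ with $\bE[|X|^p \indc{|X|>M}] < \epsilon$. Decompose each summand as $X_k = Y_k + Z_k$, where $Y_k = X_k \indc{|X_k|\le M}$ and $Z_k = X_k \indc{|X_k|>M}$, and center: $\tY_k = Y_k - \bE Y_k$, $\tZ_k = Z_k - \bE Z_k$. Because $\bE X_k = 0$, one has $X_k = \tY_k + \tZ_k$, and both $\{\tY_k\}$ and $\{\tZ_k\}$ are i.i.d.\ mean-zero. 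The elementary inequality $|a+b|^p \le 2^{p-1}(|a|^p + |b|^p)$ then reduces the task to bounding
\[
\bE\Bigl|\sum_{k=1}^n \tY_k\Bigr|^p \quad \text{and} \quad \bE\Bigl|\sum_{k=1}^n \tZ_k\Bigr|^p
\]
separately, each divided by $n$.

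For the bounded piece, Jensen's inequality (equivalently $\|\cdot\|_p \le \|\cdot\|_2$) plus independence gives
\[
\bE\Bigl|\sum_{k=1}^n \tY_k\Bigr|^p \le \bigl(n\,\var(Y_k)\bigr)^{p/2} \le (nM^2)^{p/2} = n^{p/2}M^p,
\]
so after dividing by $n$ this contribution is at most $n^{p/2-1}M^p$, which tends to $0$ because $p<2$ and $M$ is fixed.

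For the tail piece I would invoke the Von Bahr--Esseen inequality: for i.i.d.\ mean-zero $W_k$ with $\bE|W|^p<\infty$ and $1\le p\le 2$, $\bE|\sum_{k=1}^n W_k|^p \le 2 n\,\bE|W|^p$. Applied to $\tZ_k$, and noting $\bE|\tZ_k|^p \le 2^p \bE|Z_k|^p = 2^p \bE[|X|^p \indc{|X|>M}] < 2^p\epsilon$, this yields $\bE|\sum_k \tZ_k|^p \le 2^{p+1}\, n\, \epsilon$; dividing by $n$ gives a constant multiple of $\epsilon$. Combining, $\limsup_{n\to\infty} \bE|S_n|^p/n \le C_p\,\epsilon$ for some $C_p$ depending only on $p$, and since $\epsilon>0$ is arbitrary the limit is zero.

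The main obstacle is obtaining the correct $L^p$ scaling for the unbounded tail piece: the naive triangle inequality gives only $\bE|\sum_k \tZ_k| \le \sum_k \bE|\tZ_k| = O(n)$, which when combined with the normalization by $n^{1/p}$ (rather than $n$) is too weak for $p>1$. The Von Bahr--Esseen inequality supplies exactly the required linear-in-$n$ bound in $L^p$ and is the one non-elementary ingredient; its standard proof proceeds via symmetrization together with the inequality $|x+y|^p \le |x|^p + p|x|^{p-1}\,\mathrm{sgn}(x)\,y + C_p|y|^p$ valid for $p\in[1,2]$. An equivalent route is to view $\sum_{k\le n}\tZ_k$ as a martingale and invoke the Burkholder--Davis--Gundy inequality, reducing the task to a bound on the quadratic variation, which gives the same estimate.
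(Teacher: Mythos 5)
Your proof is correct. Note that the paper itself does not prove this statement---it is quoted verbatim from Gut's textbook---so there is no internal proof to compare against; your argument is essentially the standard textbook one (fixed-level truncation, centering, the $c_r$-inequality, a second-moment/Jensen bound on the bounded part giving $n^{p/2-1}M^p \to 0$, and the von Bahr--Esseen inequality to get the linear-in-$n$ $L^p$ bound on the tail part). All the individual steps check out: the centering identity $X_k=\tY_k+\tZ_k$ uses $\bE X=0$ correctly, $\bE|\tZ_k|^p\le 2^p\,\bE[|X|^p\indc{|X|>M}]$ follows from the $c_r$-inequality plus Jensen, and you correctly identify von Bahr--Esseen (or, equivalently, a Burkholder--Davis--Gundy/Marcinkiewicz--Zygmund moment inequality) as the one non-elementary ingredient needed for $1<p<2$; for $p=1$ the triangle inequality alone would suffice for the tail term.
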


We will first assume $C_{1+\delta} = 1$. From the above theorem, we know the average of i.i.d. copies of $v(\calA)_k$ satisfies:

\begin{align*}
    \bE\left[\left\lvert\frac{\sum_{k=1}^n v(\calA)_k - \bE[v(\calA)]}{n}\right\rvert\right] \leq \bE\left[\left\lvert\frac{\sum_{k=1}^n v(\calA)_k - \bE[v(\calA)]}{n}\right\rvert^{p}\right]^{1/p} = o(n^{(1-p)/p}).
\end{align*}
Therefore, one can choose $n = \calO(\epsilon^{-p/(p-1)}) = \calO(\epsilon^{-(1+\delta)/\delta})$ to ensure the `vanilla' Monte Carlo estimator has at most $\epsilon$-error. 

For any given \( C_{1+\delta} \), it's important to recognize that estimating \( \mathbb{E}[v(\mathcal{A})] \) with an error  of \( \epsilon \) is equivalent to estimating \( \mathbb{E}[v(\mathcal{A})/C] \) with an error of \( \epsilon/C \), where \( C > 0 \). Consequently, we can select \( C = C_{1+\delta}^{1/(1+\delta)} \), ensuring that \( \mathbb{E}[|v(\mathcal{A})/C|^{1+\delta}] \leq 1 \). Following this logic, the required sample size \( n \) can be derived as \( \mathcal{O}((\epsilon/C)^{-(1+\delta)/\delta}) \), which simplifies to \( \mathcal{O}\left(C_{1+\delta}^{1/\delta} \epsilon^{-(1+\delta)/\delta}\right) \).

\section{Upper Bounds}\label{sec:upper}

Before proving Theorem \ref{thm:quantum-heavy} let us introduce some definitions. Following the notations in \cite{montanaro2015quantum}, let $\calA_{<x}, \calA_{x,y}, \calA_{\geq y}$ be  algorithms 
defined by executing $\calA$ and
\begin{itemize}
  \item $\calA_{<x}$: If \(v(A) < x\), output \(v(A)\), otherwise output 0;
  \item $\calA_{x,y}$: If \(x \leq v(A) < y\), output \(v(A)\), otherwise output 0;
  \item $\calA_{\geq y}$: If \(y \leq v(A)\), output \(v(A)\), otherwise output 0.
\end{itemize}
For an event $A$ we will denote 
its characteristic function by $\mathbb I(A)$. This is a random variable that takes 1 on $A$
and 0 on $\overline{A}$. Clearly, $\bE(\mathbb I(A)^{\alpha}) = \bP(A)$
for any $\alpha>0$. Also, $v(\calA_{<x}) = v(\calA) \cdot \mathbb I(\calA < x)$, etc.

\begin{proof} (of Theorem \ref{thm:quantum-heavy}) Let $X$ be the random variable $v(\calA_{\ge 0})$. We demonstrate how to estimate $X$ with an additive error of $\epsilon/2$ and success probability at least 0.9. By doing so, we can obtain an $\epsilon$ estimation of $\bE(v(\calA))$ with success probability at least 0.8 by making the analogous estimate for $\bE(-v(\calA_{\le 0}))$ and subtracting this value from the first.
    
    Let $y \geq 0$. By H\"older's inequality, applied on 
    $v(\calA)$ and $\mathbb I(\calA \geq y)$ with norms $p = 1+\delta$ and $q = (1+\delta)/\delta$ respectively,
    we have
    \begin{align*}
    \bE[X\cdot \mathbb I(X \geq y)] &\leq \bE[ X^{1+\delta}]^{1/(1+\delta)} \,\cdot\, \bP[X \geq y]^{\delta/(1+\delta)}\\ 
    & \leq  \bE[ X^{1+\delta}]^{1/(1+\delta)} 
     \,\cdot\, 
    \left(\frac{\bE[ X^{1+\delta}]}{y^{1+\delta}}\right)^{\delta/(1+\delta)}\\ 
    & = C_{1+\delta} y^{-\delta}. 
    \end{align*}
Therefore, setting $y := \left(\frac{8C_{1+\delta}}{\epsilon}\right)^{1/\delta}$ yields 
\begin{equation}\label{eq:largecase}
\bE[v(\calA_{\ge y})] \; = \; \bE[X\cdot \mathbb I(\calA \geq y)] \; \leq \; 
 C_{1+\delta} y^{-\delta} =
C_{1+\delta} \frac{\epsilon}{8C_{1+\delta}}\; = \;
\frac{\epsilon}{8}
\end{equation}
Let $Z$ be the $\left(\frac{8C}{\epsilon}\right)^{1/\delta}$-truncated version of $X$
(recall, $C$ was an upper bound on $C_{1+\delta}$, given to the user in advance):
\[
Z = X\cdot \mathbb I\left(X \leq \left(\frac{8C}{\epsilon}\right)^{1/\delta}\right)
\]
Then due to $Z^{2}\le X^{1+\delta} \left(\frac{8C}{\epsilon}\right)^{\frac{1}{\delta}(1-\delta)}$
we have
\begin{align}
    \bE[ Z^{2}] & \le \; \bE[ X^{1+\delta}] \left(\frac{8C}{\epsilon}\right)^{\frac{1-\delta}{\delta}} \\
    & = \;  C_{1+\delta} \left(\frac{8C}{\epsilon}\right)^{-1} \cdot
    \left(\frac{8C}{\epsilon}\right)^{\frac{1}{\delta}} \; \le \; 
    \frac{1}{8} \epsilon\cdot \left(\frac{8C}{\epsilon}\right)^{\frac{1}{\delta}}
\end{align}
so
\[
\sigma\left(Z\right)^2/\epsilon^{2} \; \le \;
\bE[ Z^{2}]/\epsilon^2 \; = \;
\epsilon^{-2}\cdot \frac{1}{8} \epsilon\cdot \left(\frac{8C}{\epsilon}\right)^{\frac{1}{\delta}}
\; = \; 8^{\frac{1-\delta}{\delta}} \epsilon^{-\frac{1+\delta}{\delta}} C^{\frac{1}{\delta}}
\]
By Theorem \ref{thm:KO} if we set the sample size to be $n = \calO(\sigma\left(Z\right)/\epsilon)$, we can bound the additive error of approximating $Z$ by $\epsilon/8$ with probability at least 0.9. We have 
\[
\calO(\sigma\left(Z\right)/\epsilon) \; = \;
\calO\left(\sqrt{ 8^{\frac{1-\delta}{\delta}} \epsilon^{-\frac{1+\delta}{\delta}} C^{\frac{1}{\delta}} } 
\right) \; = \; \calO\left( \epsilon^{-\frac{1+\delta}{2\delta}} C^{\frac{1}{2\delta}} \right)
\]
The same estimator (without change) approximates $\bE[X]$ with additive error \\
$\epsilon/8 + \bE[X-Z] \le \epsilon/8 + \epsilon/8 = \epsilon/4$, what we needed. \end{proof}
 
 \subsection{Mean absolute error}\label{sec:MAE}
Theorem \ref{thm:quantum-heavy} designs an algorithm which estimates $\bE[v(\calA)]$ up to additive error $\epsilon$ with success probability $0.8$. The success probability can be  boosted to $1- \eta$ for any $\eta >0$ by repeating the algorithm in Theorem \ref{thm:quantum-heavy} for $\calO(\log(1/\eta))$ times and then take the median. This is  referred to as the `powering lemma', and  see Lemma 1 in \cite{montanaro2015quantum} for a proof. 

The algorithm presented in Theorem \ref{thm:quantum-heavy} (and those in \cite{montanaro2015quantum}) ensures that the results are accurate with a high probability. However, this doesn't straightforwardly convert to a guarantee on the mean absolute error (MAE) —a measure that is arguably more common, particularly in the analysis of classical algorithms. Additionally, the MAE is a slightly stronger metric because it implies  the `high probability' assurances through the immediate application of the Markov inequality. Our next result shows that with a minor modification to the algorithm in Theorem \ref{thm:quantum-heavy} and utilizing the `power lemma,' one can establish a guarantee on the MAE, only introducing an additional logarithmic factor dependent on $C$ and $\epsilon$.

\begin{theorem}\label{thm:quantum-heavy-expected-error}
    Fix accuracy level $\epsilon$. There is a quantum algorithm $\widetilde{\Qmc}(\epsilon)$
    that outputs an estimator $v\left(\widetilde{\Qmc}(\epsilon)\right)$ which estimates $\bE[v(\calA)]$ within  $\epsilon$-MAE, i.e.,
    \[
    \bE\left[\left\lvert v\left(\widetilde{\Qmc}(\epsilon)\right) - \bE[v(\calA)]\right\rvert\right] < \epsilon,
    \]
 
by using $\calO\left(\log\frac{C}{\epsilon} \cdot C^{1/2\delta} \epsilon^{-(1+\delta)/2\delta}\right) $ samples. The hidden factor within $\calO$ depends on $\delta$. 
\end{theorem}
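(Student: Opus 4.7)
\textbf{Proof plan for Theorem \ref{thm:quantum-heavy-expected-error}.} The plan is to take the high-probability estimator of Theorem \ref{thm:quantum-heavy}, boost its success probability via the powering lemma, clip the output to an interval known a priori to contain the true mean, and then integrate the resulting error. Throughout, write $\mu := \bE[v(\calA)]$ and note that Lyapunov's inequality gives
\[
\lvert \mu \rvert \; \le \; \bE[\lvert v(\calA)\rvert] \; \le \; \bE[\lvert v(\calA)\rvert^{1+\delta}]^{1/(1+\delta)} \; \le \; C^{1/(1+\delta)} \; =: \; R,
\]
so $\mu \in [-R,R]$, a range computable from the input $C,\delta$ alone.

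First, I would run the algorithm from Theorem \ref{thm:quantum-heavy} at accuracy level $\epsilon/2$, obtaining a base estimator $\hat\mu_0$ with $\lvert \hat\mu_0 - \mu\rvert \le \epsilon/2$ with probability at least $4/5$. Applying the powering lemma (median of independent runs) inflates the sample cost by a multiplicative factor $\calO(\log(1/\eta))$ and produces $\hat\mu_1$ with $\lvert\hat\mu_1-\mu\rvert\le\epsilon/2$ with probability at least $1-\eta$, for a free parameter $\eta>0$ to be fixed momentarily. Next, define the final estimator by clipping: $\hat\mu := \mathrm{clip}_{[-R,R]}(\hat\mu_1)$. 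Because $\mu \in [-R,R]$, the map $x \mapsto \mathrm{clip}_{[-R,R]}(x)$ is a contraction toward $\mu$, so on the good event $\{\lvert \hat\mu_1-\mu\rvert \le \epsilon/2\}$ we still have $\lvert \hat\mu - \mu\rvert \le \epsilon/2$, while on the bad event the clipping guarantees the deterministic bound $\lvert \hat\mu - \mu\rvert \le 2R$.

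Integrating these two cases yields
\[
\bE\left[\lvert \hat\mu - \mu\rvert\right] \; \le \; (1-\eta)\cdot \frac{\epsilon}{2} + \eta \cdot 2R \; \le \; \frac{\epsilon}{2} + 2 R \eta.
\]
Choosing $\eta := \epsilon/(4R) = (\epsilon/4)\cdot C^{-1/(1+\delta)}$ makes the right-hand side at most $\epsilon$, as required. The logarithmic overhead is then $\log(1/\eta) = \calO(\log(R/\epsilon)) = \calO(\log(C/\epsilon))$, uniformly in $\delta \in (0,1]$. Multiplying by the base sample complexity from Theorem \ref{thm:quantum-heavy} gives the claimed bound $\calO\!\left(\log\frac{C}{\epsilon}\cdot C^{1/2\delta}\epsilon^{-(1+\delta)/2\delta}\right)$.

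The main subtlety is not in any of the individual steps but in verifying that clipping does not spoil the concentration guarantee while simultaneously taming the tail contribution; this is handled by the observation that clipping to an interval containing $\mu$ is 1-Lipschitz and pointwise error-nonincreasing. A secondary point to check carefully is that $\eta$ depends polynomially on $C$ and $\epsilon$, so that the logarithmic factor remains $\calO(\log(C/\epsilon))$ rather than dragging in $\delta$-dependent powers; here the factor $1/(1+\delta)$ in $R$ is bounded by $1$ since $\delta \in (0,1]$, so it is absorbed into the $\calO(\cdot)$ as claimed in the theorem statement.
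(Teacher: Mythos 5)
Your proposal is correct and follows essentially the same route as the paper: bound $\lvert\bE[v(\calA)]\rvert$ by $C^{1/(1+\delta)}$, clip to that range so the worst-case error is $2C^{1/(1+\delta)}$, boost the failure probability down to $\Theta(\epsilon\, C^{-1/(1+\delta)})$ via the powering lemma, and integrate the two cases. The only (immaterial) difference is that the paper clips each individual run before taking the median, whereas you clip the median itself; both give the same MAE bound and the same $\calO(\log(C/\epsilon))$ overhead.
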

\begin{proof}[Proof of Theorem \ref{thm:quantum-heavy-expected-error}]
    For $a < b$ define the clipping function $\Clip_{a,b}(x)$ as 
    \begin{align*}
        \Clip_{a,b}(x) := \begin{cases}
        a \qquad \text{If}~~ x < a\\
        x \qquad \text{If}~~ a\leq x \leq b \\
        b \qquad \text {If}~~ x > b. \\
        \end{cases}
    \end{align*}
Let us denote the algorithm in Theorem \ref{thm:quantum-heavy} that achieves an accuracy of $\epsilon$ and a success probability $p$, with $\Qmc(\epsilon, p)$, and its output with $v\left(\Qmc(\epsilon, p)\right)$. Consider a 'clipped version' of $\Qmc(\epsilon, p)$, that we will denote by $\Qmc^\Clip(\epsilon, p)$, and which we define as follows: after running $\Qmc(\epsilon, p)$ we apply the clipping function $\Clip(-C^{1/(1+\delta)}, C^{1/(1+\delta)})$ on $v\left(\Qmc(\epsilon, p)\right)$.  The above modification satisfies the following conditions: 
\begin{enumerate}
\item It preserves or reduces the estimation error, due to
$\lvert \bE[v(\calA)]\rvert \leq  \bE[\lvert v(\calA)\rvert] 
\leq C^{1/(1+\delta)}$, which is in turn a consequence 
of Jensen's inequality; 
\item It ensures that the maximum error does not exceed $2C^{1/(1+\delta)}$; 
\item It does not increase the computational cost, since clipping is a mere post-processing 
which does not require taking additional samples. 
\end{enumerate}

Corresponding to 1-3, the modified algorithm $\Qmc^\Clip(\epsilon/2, 0.8)$ produces an estimator that has an additive error of no more than $\epsilon/2$ with a probability of at least $0.8$. It has an upper bound of 
$2C^{\frac{1}{1+\delta}}$
on the maximum possible error, and it incurs a computational cost of $\calO\left(C^{\frac{1}{2\delta}}
\epsilon^{-\frac{1+\delta}{2\delta}}\right)$.  

Our final algorithm $\widetilde{\Qmc}(\epsilon)$ utilizes the powering lemma to boost the success probability to $1 - \epsilon/(4C^{1/(1+\delta)})$. This enhancement is achieved by running $\Qmc^\Clip(\epsilon/2, 0.8)$ for $\calO(\log(4C^{1/(1+\delta)}/\epsilon))$ times, and then taking the median of the results. This process ensures that the resulting estimator has an additive error of at most $\epsilon/2$ with a probability of at least $1 - \epsilon/(4C^{1/(1+\delta)})$, while the maximum error remains  at most $2C^{1/(1+\delta)}$. Consequently, the MAE can be readily upper bounded by:
\[
\bE\left[\left\lvert v\left(\widetilde{\Qmc}(\epsilon)\right) - \bE[v(\calA)]\right\rvert\right]  < \frac{\epsilon}{2} \cdot 1+  2C^{1/(1+\delta)} \frac{\epsilon}{4C^{1/(1+\delta)}} = \epsilon
\]
The sample complexity of $\widetilde{\Qmc}(\epsilon)$ is
$
\calO(\log(4C^{1/(1+\delta)}/\epsilon)) \cdot \calO\left(C^{1/2\delta} \epsilon^{-(1+\delta)/2\delta}\right).
$
\end{proof}

\subsection{Absolute \((1+\delta)^{\text{th}}\)-central moment}
We extend our analysis to include cases where the absolute \((1+\delta)^{\text{th}}\)-central moment of \( v(\calA) \) has a known upper bound, \(\Ccen\), that is, \(\mathbb{E}\left[\lvert v(\calA) - \mathbb{E}[v(\calA)]\rvert^{1+\delta}\right] \leq \Ccen\). In the $\delta = 1$ case, this corresponds to an upper bound of the variance which has been considered in \cite{montanaro2015quantum, kothari2023mean}. Following the idea
in \cite{montanaro2015quantum},  our Theorem \ref{thm:quantum-heavy} can also be adapted  to cover this case with cost $\calO\left(\Ccen^{1/2\delta} \epsilon^{-(1+\delta)/2\delta}\right)$ 
(substituting \(C_{1+\delta}\) with \(\Ccen\)).

Our revised algorithm begins by running \(\calA\) once, and storing the result as 
\( v_0 \). We then treat \( v_0 \) as a constant, and define \(\calA'\) 
such that \(v(\calA') = v(\calA) - v_0 \).  Subsequently, we run $\Qmc(\epsilon,0.9)$
of Theorem \ref{thm:quantum-heavy} (as defined in the previous section) with $C_{1+\delta} := 9 \cdot 2^{1+\delta}\Ccen$ on $\calA'$. Finally we add $v_0$ to the output of $\Qmc$ to obtain our final output. 

We claim our algorithm estimates $\bE[v(\calA)]$ up to additive error $\epsilon$ with success probability at least $0.8$.  This is because the probability of the event 
\[
E := \{\lvert v_0 - \mathbb{E}[v(\calA)]\rvert > (9\Ccen)^{1/(1+\delta)}\} 
\]
is at most $\frac{\Ccen}{9\Ccen} = \frac{1}{9}$.  For any \( v_0 \) within \( E \), the output of \(\calA'\) has a \((1+\delta)^{\text{th}}\) moment upper bounded by \(9 \cdot 2^{1+\delta}\Ccen\). Our final algorithm succeeds if $\neg{E}$ holds and $\Qmc$ succeeds, which has probability at least $(1-1/9) \times 0.9 = 0.8 $.

\section{Lower bounds}\label{sec:lower}

In this section we prove a lower bound of $\Omega\left(C^{1/2\delta} \epsilon^{-(1+\delta)/2\delta}\right)$ on the number of samples necessary for a quantum sampler, and an
$\Omega\left(C^{1/\delta} \epsilon^{-(1+\delta)/\delta}\right)$ for a classical sampler
(recall that $C$ was an upper estimate on $C_{1+\delta}$, given to the algorithm). 
Given that we have matching upper bounds (Theorems \ref{thm:quantum-heavy}, \ref{thm:MZ-LLN}) we now have a comprehensive understanding of the situation. 

We establish our lower bounds by reducing Grover-type search problems to our mean estimation problems. This connection is not new and is mentioned or utilized in one form or another at least in \cite{montanaro2015quantum, 
Belovs2019QuantumAF, Hamoudi2021Quantum, kothari2023mean}. However, we believe it is sensible to present the full argument not only for the sake of completeness and for didactic purposes but also to demonstrate how the bounds are obtaied across all ranges of
$\epsilon$ and $C_{1+\delta}$ in both classical and quantum contexts, aside from constant factors.

Let $[N] = \{1,\ldots,N\}$, $M\in \mathbb{R}\setminus\{0\}$. 
We define a set of functions of the type $[N]\rightarrow \mathbb{R}$, altogether $N$ of them, by:
\begin{equation}
    f_{i}(k) = 
    \left\{\begin{array}{lll}
    0 & {\rm if} & k \neq i \\
    M & {\rm if} & k =i \\
    \end{array}\right. \;\;\;\;\;\;\;\;\;\; 1\le i, k \le N
\end{equation}
We define an additional function, $f_{0}$ which takes $0$ everywhere in $[N]$. 
Let
\begin{equation}
    \calP = \{f_{1},\ldots,f_{N}\} \;\;\;\;\;\; \calN = \{f_{0}\}
\end{equation}

Define the function classification problem, $\Pi$, as follows:

\noindent{\bf Problem $\Pi(N,M)$:} Alice (the oracle) hides a function $f \in \mathcal{P} \cup \mathcal{N}$. Our task is to answer `yes' if $f \in \mathcal{P}$ and `no' if $f \in \mathcal{N}$. (Here the actual value of $M$ plays no role, and it is only important that $M \neq 0$.) The oracle is accessed 
in the manner of the usual query model, which is either randomized or quantum.

The task of detecting a target in an unsorted $N$-item database, which needs $\Omega(N)$ queries in a classical setting (folklore, but see \cite{Bennett1981Relative}) and $\Omega(\sqrt{N})$ in a quantum one (see \cite{Bennett1997Strengths}), is precisely our classification problem.
 \begin{theorem}\label{thm:lbds}\cite{Bennett1981Relative, Bennett1997Strengths}
     The complexity of solving $\Pi(N,M)$ with at most 0.2 probability of error on {\em all} inputs
     in $\calP\cup\calN$ has query complexity
     \begin{enumerate}
         \item $\Omega(N)$ in the randomized query model
         \item $\Omega(\sqrt{N})$ in the quantum query model.
     \end{enumerate}
 \end{theorem}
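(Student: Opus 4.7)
The plan is to observe that $\Pi(N,M)$ is, up to a trivial relabelling of oracle responses, the standard unstructured search decision problem: given query access to $f$, decide whether $f\equiv 0$ or $f$ has a unique nonzero entry. Since $M$ is a fixed nonzero constant, the algorithm can postprocess each oracle answer into its indicator-of-nonzero, so both bounds reduce to the classical and quantum lower bounds for unstructured search from \cite{Bennett1981Relative, Bennett1997Strengths}. I will sketch the two arguments separately.

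For the classical bound, I would invoke Yao's minimax principle with the input distribution $\mu$ that places mass $1/2$ on $f_{0}$ and mass $1/(2N)$ on each $f_{i}$, $i\in[N]$. Any deterministic algorithm making at most $T$ queries, when executed on $f_{0}$, uses some fixed query set $Q$ with $|Q|\leq T$ and therefore returns an identical transcript, and hence the same output, on every $f_{i}$ with $i\notin Q$. Thus it answers the same on $f_{0}$ and on at least $N-T$ of the $f_{i}$'s; one of these answers must be wrong. The overall error under $\mu$ is therefore bounded below by $\tfrac{1}{2}\cdot\tfrac{N-T}{N}$, and demanding this be at most $0.2$ forces $T=\Omega(N)$.

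For the quantum bound, I would use the hybrid method of \cite{Bennett1997Strengths}. Let $|\psi_{f}^{t}\rangle$ denote the state of the quantum algorithm after $t$ queries when the oracle encodes $f$. The core inductive inequality is
\[
\sum_{i=1}^{N}\bigl\||\psi_{f_{i}}^{t}\rangle-|\psi_{f_{0}}^{t}\rangle\bigr\|^{2}\;\leq\;4t^{2},
\]
which follows because a single query can alter the state only on amplitudes supported on the marked index $i$, so the $\ell_{2}$ perturbation summed over $i$ grows by at most $2$ per query, and Cauchy--Schwarz yields the squared form. If the algorithm distinguishes each $f_{i}$ from $f_{0}$ with error at most $0.2$, then by the standard relation between trace distance and distinguishing probability, $\||\psi_{f_{i}}^{T}\rangle-|\psi_{f_{0}}^{T}\rangle\|^{2}\geq c$ for some absolute constant $c>0$ and every $i\in[N]$. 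Summing and combining with the inductive inequality yields $cN\leq 4T^{2}$, hence $T=\Omega(\sqrt{N})$.

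The main obstacle is executing the hybrid-method induction cleanly; the rest is bookkeeping. Concretely, one must verify that if $P_{i}$ denotes the projector onto the query-index register equaling $i$, then a single oracle application adds at most $2\|P_{i}|\psi_{f_{0}}^{t}\rangle\|$ to $\||\psi_{f_{i}}^{t+1}\rangle-|\psi_{f_{0}}^{t+1}\rangle\|$, after which Cauchy--Schwarz combined with $\sum_{i}\|P_{i}|\psi\rangle\|^{2}\leq 1$ gives the per-query increment of $4t+\calO(1)$ and the $4t^{2}$ bound telescopes out. Once that quantum step is in place, the reduction back to $\Pi(N,M)$ is immediate in both models because the oracle response carries no more information than the bit $\Indc\{f(k)\neq 0\}$.
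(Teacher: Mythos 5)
The paper does not prove this theorem; it is stated as a citation to \cite{Bennett1981Relative, Bennett1997Strengths}, with the one-line observation that $\Pi(N,M)$ is exactly unstructured search. Your reconstruction is correct and is precisely the standard route those references take: the Yao/minimax counting argument for the randomized bound (a $T$-query deterministic algorithm run on $f_0$ cannot distinguish $f_0$ from the $N-T$ unqueried $f_i$'s), and the BBBV hybrid argument for the quantum bound. The one place your write-up is loose is the inductive form of the hybrid bound --- the per-query increment is cleanest stated as $\sqrt{D_{t+1}} \le \sqrt{D_t} + 2$ where $D_t = \sum_i \||\psi_{f_i}^t\rangle - |\psi_{f_0}^t\rangle\|^2$, using $\sum_i \|P_i|\psi\rangle\|^2 \le 1$ and Cauchy--Schwarz, which telescopes to $D_T \le 4T^2$ --- but your non-inductive version (bounding each $\||\psi_{f_i}^T\rangle - |\psi_{f_0}^T\rangle\|$ by $2\sum_{t<T}\|P_i|\psi_{f_0}^t\rangle\|$ and then applying Cauchy--Schwarz in $t$) gives the same conclusion and is fine. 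Your reduction remark that the oracle answer carries no more information than $\Indc\{f(k)\neq 0\}$ is also the right way to see that the value of $M$ is irrelevant, matching the paper's parenthetical comment.
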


To carry out our plan, we also define the following mean-estimation problem, $\Sigma$:

\noindent{\bf Problem $\Sigma(N,M)$:} If $f$ is any of $f_{0},\ldots,f_{N}$,
     we define $\mathcal{A}_{f}$ to be the sampling oracle for the random variable that draws 
     $k \in [N]$ randomly and uniformly and returns $f(k)$. Design an estimator 
     that estimates $\mathbb{E}[v(\mathcal{A}_{f})]$ with additive error less or equal than
     $\frac{M}{3N}$ for any $f \in \mathcal{P} \cup \mathcal{N}$. 

We will reduce problem $\Pi$ to $\Sigma$. Using Theorem \ref{thm:lbds} we then deduce lower bounds for $\Sigma$.
     However, before we do this, we should address a significant omission. As noted in \cite{kothari2023mean}, theoretically, no sampling is necessary if we have access to the 'source code' of a random variable $X$. To ensure that our lower bounds are meaningful, it must also be assumed that the sampling oracle 
     for any $\mathcal{A}_{f}$ in $\Sigma$
     is derived from the code in a specific, restricted manner. In the classical context, the nature of access to the code is straightforward — any alterations are not permitted. However, in the quantum scenario, some minor form of alteration is inevitable, especially when quantizing if the code was originally classical.

    In our proof of the lower bound, we adopt the model from \cite{kothari2023mean} and assume three operators through which a random variable $X:\Omega\rightarrow\mathbb{R}$ is accessed (sampled). First, we require a unitary circuit, which we call the \emph{Synthesizer}, that represents a probability distribution $\Omega\rightarrow\mathbb{R}_{\ge 0}$ as a unitary map. This map transforms $|\vec{0}\rangle$ into $\sum_{\omega\in\Omega}\sqrt{p(\omega)} |\omega\rangle |\text{garbage}_{\omega}\rangle$. Here, $p(\omega)$ denotes the probability of $\omega$. We also need access to operators controlled-$\mathcal{X}$ and controlled-$\mathcal{X}^{\dagger}$, where $\mathcal{X}$ is any unitary circuit that behaves as follows:
    \[
    \mathcal{X} |\omega\rangle |0^b\rangle |0^c\rangle = |\omega\rangle |X(\omega)\rangle |0^c\rangle
    \]

    \emph{What is relevant to our lower bound is that Theorem \ref{thm:lbds} provides a lower bound on the number of applications of $\mathcal{X}$ and $\mathcal{X}^{\dagger}$ in any query algorithm that classifies $X$, i.e. tells whether it is all-zero or takes a non-zero value for a specific $\omega$. Additionally, it is important that the probability distributions behind all $\mathcal{A}_{f}$ are identical, thus the same \emph{Synthesizer} is used. Consequently, the complexity ultimately hinges on the 
    number of uses of $\mathcal{X}$ and $\mathcal{X}^{\dagger}$.}

    It is in the above sense that in the following lemma we talk about the sample complexity of 
    $\mathcal{A}_{f}$.

 \begin{lemma}\label{lem:simul}
     Assume that there is a classical sampling procedure, $\Cmc(\frac{M}{3N}, 0.8)$ (resp. quantum procedure 
     $\Qmc(\frac{M}{3N}, 0.8)$),
     that solves $\Sigma(N,M)$ with sample complexity $n$. Here parameters 
     $\frac{M}{3N}$ and $0.8$ should be read as 'the additive error of the estimator is 
     $\le \frac{M}{3N}$ with probability at least 0.8', for any $v(\calA_{f})$, 
     $f\in \calP\cup\calN$.
     Then such a sampler 
     can be turned into a classical (resp. quantum) query algorithm
     that solves $\Pi(N,M)$ with at most 0.2 probability of error for any input $f\in \calP\cup\calN$, 
     and which has complexity $n$.
 \end{lemma}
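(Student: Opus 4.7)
\textbf{Proof plan for Lemma \ref{lem:simul}.} The plan is to show that an estimator for $\mathbb{E}[v(\calA_f)]$ with additive accuracy $M/(3N)$ and success probability $0.8$ can be post-processed into a thresholding decision that solves $\Pi(N,M)$ with the same success probability, and then to verify carefully that every sample of $\calA_f$ is realized by a single query to $f$ (both classically and quantumly), so sample complexity transfers one-to-one into query complexity.

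First I would compute the two possible values of the mean to see that they are well-separated by the claimed accuracy. Since $\calA_f$ draws $k\in[N]$ uniformly and outputs $f(k)$, we have $\mathbb{E}[v(\calA_{f_0})]=0$ and $\mathbb{E}[v(\calA_{f_i})]=M/N$ for every $i\in[N]$. Without loss of generality take $M>0$ (the case $M<0$ is symmetric). With additive error at most $M/(3N)$, a correct estimator lies in $[-M/(3N),\,M/(3N)]$ when $f\in\calN$ and in $[2M/(3N),\,4M/(3N)]$ when $f\in\calP$. These intervals are disjoint, so the reduction outputs ``yes'' iff the estimator returned by $\Cmc(M/(3N),0.8)$ (resp.\ $\Qmc(M/(3N),0.8)$) exceeds the threshold $M/(2N)$. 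Conditional on the estimator being within its nominal error, the answer is always correct, hence the error probability is at most $0.2$ for every $f\in\calP\cup\calN$, exactly as needed for Theorem~\ref{thm:lbds}.

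Next I would justify that the reduction preserves the complexity parameter $n$. In the classical case this is essentially trivial: one invocation of $\calA_f$ is ``pick a uniform $k$ and return $f(k)$,'' which is exactly one oracle query to $f$; the uniform draw of $k$ is independent of $f$ and free in the query model. In the quantum case I would invoke the Synthesizer/$\calX$ decomposition from the paragraph preceding the lemma: the underlying distribution is the uniform distribution on $[N]$, which does not depend on $f$, so the Synthesizer is a fixed circuit (the standard uniform superposition over $[N]$), and the only $f$-dependent component is $\calX|k\rangle|0^b\rangle|0^c\rangle=|k\rangle|f(k)\rangle|0^c\rangle$, which is precisely the standard quantum query oracle for $f$. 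Therefore each application of $\calX$ or $\calX^{\dagger}$ inside $\Qmc$ uses exactly one quantum query to $f$, and $n$ samples translate into $n$ queries.

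The main obstacle in making the argument airtight is the second step: one has to be genuinely careful to rule out ``cheating'' by inspecting the source code of $\calA_f$, since as the paper emphasizes, knowledge of the code could otherwise bypass sampling entirely. My plan is to address this head-on by pinning down, as above, that the only way information about $f$ enters the circuit is via controlled-$\calX$ and controlled-$\calX^{\dagger}$; everything else (Synthesizer, classical control, measurements, ancillae) is $f$-independent, so a lower bound on queries to $f$ transfers directly to a lower bound on the number of $\calX$/$\calX^{\dagger}$ calls, and hence on the sample complexity. Once this is established, combining the contrapositive of the reduction with Theorem~\ref{thm:lbds} immediately yields the claimed $\Omega(N)$ and $\Omega(\sqrt{N})$ lower bounds on the sample complexity of $\Sigma(N,M)$, which will then be converted to the desired bounds in $\epsilon$ and $C$ in the subsequent section by choosing $N,M$ appropriately.
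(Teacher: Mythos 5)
Your proposal is correct and follows essentially the same route as the paper's proof: compute the two possible means $0$ and $M/N$, threshold the estimator at $M/(2N)$ to decide membership with error probability at most $0.2$, and observe that each sample corresponds to one query so the complexity transfers unchanged. Your extra care in identifying the Synthesizer as $f$-independent and $\mathcal{X}$ as the standard query oracle is exactly the point the paper makes in the discussion preceding the lemma, so there is no substantive difference.
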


 \begin{proof} By the definition of $\calA_{f}$ (recall $\calA_{f}$ from
 the definition of $\Pi(N,M)$) we have:
     \begin{align}\label{eq:ncase}
     Case\; f\in\calN: \;\;\;\;\;\bE[v(\calA_{f_{0}})] & = 0 \\\label{eq:pcase}
     Case\; f\in\calP: \;\;\;\;\;\bE[v(\calA_{f_{i}})] & = \frac{M}{N} \;\;\;\;\;\;\;\;\;\; {\rm for}  \; 1\le i \le N 
     \end{align}
     Thus, estimating $\bE[v(\calA_{f})]$ to within an additive error 
     of $\frac{M}{3N}$ differentiates between cases 
     (\ref{eq:ncase}) and (\ref{eq:pcase}).
     The above fact establishes the connection between $\Sigma(N,M)$ and $\Pi(N,M)$. To convert the query complexity lower bound for $\Pi(N,M)$ into a sample size lower bound for
     $\Sigma(N,M)$ , we argue that our sampling algorithm is in fact a special query algorithm, where the samples become the queries. The sampling model is more restrictive than the query model because the samples, informally speaking, 'come from nature'—that is, they are randomly chosen, rather than being selected through 'clever' algorithmic choices. Due to this, the sampling model is contained within the query model, allowing us to apply any existing lower bound from the more powerful query model to establish a bound in the less powerful sampling model.

In the final step, we turn the output of $\Cmc(\frac{M}{3N}, 0.8)$ (resp. $\Qmc(\frac{M}{3N}, 0.8)$, in the quantum case)
to the output of $\Sigma(M,N)$. We output 'no' if our mean-estimator returns a value less than $\frac{M}{2N}$, and 'yes' whenever it returns a value at least $\frac{M}{2N}$.

     Assume that $f\in \calN$, i.e. that $f=f_{0}$. Then $\bE[v(\calA_{f})] = 0$
     and the probability that $\Cmc(\frac{M}{3N}, 0.8)$ (resp. $\Qmc(\frac{M}{3N}, 0.8)$)
     returns an estimation greater than
     $\frac{M}{3N}$ (and in particular greater or equal than $\frac{M}{2N}$)
     is at most 0.2. Similarly one can see that if $f\in \calP$ (in 
     which case $\bE[v(\calA^{r}_{f})] = \frac{M}{N}$)
     the probability to get an estimated value below $\frac{2M}{3N}$ is at most 0.2.

     In conclusion, our mean estimator turned query algorithm solves the classification 
     problem for all $f\in \calP \cup \calN$ with the upper bound on the probability of failure 
     on any input, required, and with query complexity equaling to the sample size of $\Cmc(\frac{M}{3N}, 0.8)$ (resp. $\Qmc(\frac{M}{3N}, 0.8)$, in the quantum case). 
 \end{proof}

 Lemma \ref{lem:simul} and Theorem \ref{thm:lbds} establish a lower bound on the number of samples required by $\Cmc(\frac{M}{3N}, 0.8)$ (or $\Qmc(\frac{M}{3N}, 0.8)$ in the quantum case), namely $\Theta(N)$ and $\Theta(\sqrt{N})$ respectively. 
To finish our argument, we are going to calculate these bounds in terms of 
\begin{align}\label{eq:epseq}
\epsilon\; & = \; \frac{M}{3N} \\\label{eq:ceq}
C \; & = \; \max_{0\le i \le N} C_{1+\delta}[f_{i}] \; = \; 
\underbrace{C_{1+\delta}[f_{k}]}_{k=1,\ldots,N} = \frac{N-1}{N}\times 0
+ \frac{1}{N} \times M^{1+\delta} \; = \;
\frac{M^{1+\delta}}{N}
\end{align}

\begin{remark}
    Here we assume that what algorithms $\Cmc(\frac{M}{3N}, 0.8)$ and $\Qmc(\frac{M}{3N}, 0.8)$ receive as an upper bound for
    $C_{1+\delta}(f)$ is
    $C=\frac{M^{1+\delta}}{N}$ for any $f\in \calP \cup \calN$.
\end{remark}

First we express $M$ and $N$ in terms of $\epsilon$
and $C$. We have $N= \frac{M}{3\epsilon}$ and $N = \frac{M^{1+\delta}}{C}$, so
$\frac{1}{3\epsilon} = \frac{M^{\delta}}{C}$, which in turn gives 
\[
M = \left( \frac{C}{3\epsilon} \right)^{\frac{1}{\delta}} \;\;\;\;\;\; {\rm and}
\;\;\;\;\;\;
N = \left( \frac{C}{3\epsilon} \right)^{\frac{1}{\delta}} \frac{1}{3\epsilon}
= 3^{-\frac{1}{\delta}-1} C^{\frac{1}{\delta}} \epsilon^{- \frac{1+\delta}{\delta}}
\]
In the classical case Theorem \ref{thm:lbds} yields:
\[
    n \; = \; \Omega(N) \; = \; \Omega\left( C^{\frac{1}{\delta}} \epsilon^{- \frac{1+\delta}{\delta}} \right) 
\]
In the quantum case Theorem \ref{thm:lbds} yields:
\[
    n \; = \; \Omega(\sqrt{N}) \; = \; \Omega\left( C^{\frac{1}{2\delta}} \epsilon^{- \frac{1+\delta}{2\delta}} \right) 
\]
We summarize these results in the following theorem:
\begin{theorem}\label{thm:lb}
    Let $\delta >0$, and 
    $C$ be an upper bound on
    $C_{1+\delta}[v(\calA)] = \bE[|v(\calA)^{1+\delta}|]$. Then 
    if a classical sampler is able to give an additive $\epsilon$-approximation of the mean
    of $v(\calA)$ with confidence at least 0.8
    for {\em all} $\calA$s with $C_{1+\delta}[v(\calA)]\le C$, 
    then this procedure must have (worst case) sample complexity at least 
    $\Omega\left( C^{\frac{1}{\delta}} \epsilon^{- \frac{1+\delta}{\delta}} \right)$.
    The analogous lower bound for the quantum case is
    $\Omega\left( C^{\frac{1}{2\delta}} \epsilon^{- \frac{1+\delta}{2\delta}} \right)$.
\end{theorem}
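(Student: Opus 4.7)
The theorem is really a consequence of two ingredients already assembled in this section: the reduction Lemma~\ref{lem:simul}, which converts any $\epsilon$-mean estimator for $v(\calA_f)$ into a query algorithm for the classification problem $\Pi(N,M)$, and Theorem~\ref{thm:lbds}, which provides Grover-type query lower bounds for $\Pi(N,M)$. My plan is therefore to pick $M$ and $N$ as functions of the input parameters $\epsilon$ and $C$ so that the hypotheses of Theorem~\ref{thm:lb} match those of the reduction, and then read off the sample complexity bounds from the query bounds.

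I would proceed in four steps. First, I fix the target error of the sampler for $\Sigma(N,M)$ at $\epsilon = M/(3N)$, which by the definition of $\Sigma$ is exactly the regime in which the estimator distinguishes $\bE[v(\calA_{f_0})] = 0$ from $\bE[v(\calA_{f_i})] = M/N$ for $1 \le i \le N$. Second, I compute the worst-case $(1+\delta)$-moment in the family, which is $\bE[|v(\calA_{f_i})|^{1+\delta}] = M^{1+\delta}/N$ for $i \ge 1$ and $0$ for $i=0$, so the moment upper bound handed to the algorithm is $C := M^{1+\delta}/N$. Third, I solve the system $\epsilon = M/(3N)$ and $C = M^{1+\delta}/N$ for $N$, which gives $N = 3^{-1/\delta - 1} C^{1/\delta} \epsilon^{-(1+\delta)/\delta}$. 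Fourth, I invoke Lemma~\ref{lem:simul} to transfer the query lower bound to the sampling model and then apply the classical $\Omega(N)$ and quantum $\Omega(\sqrt{N})$ bounds from Theorem~\ref{thm:lbds} to obtain the two stated conclusions.

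The substantive work is really contained in Lemma~\ref{lem:simul}; once that is in hand, Theorem~\ref{thm:lb} is a bookkeeping step. The only point that warrants care is confirming that the family $\{\calA_{f_i}\}_{i=0}^N$ does satisfy the moment constraint $C_{1+\delta} \le C$ with our choice of $C$, so that an algorithm assumed to work for all $\calA$ with $C_{1+\delta}[v(\calA)] \le C$ actually applies to these instances. This is immediate from the step-two calculation. Substituting the expression for $N$ from step three into $\Omega(N)$ and $\Omega(\sqrt{N})$ yields $\Omega(C^{1/\delta}\epsilon^{-(1+\delta)/\delta})$ and $\Omega(C^{1/(2\delta)}\epsilon^{-(1+\delta)/(2\delta)})$ respectively, which are exactly the claimed lower bounds.
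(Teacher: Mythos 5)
Your proposal is correct and follows the paper's own argument essentially verbatim: the paper likewise treats Theorem~\ref{thm:lb} as bookkeeping after Lemma~\ref{lem:simul} and Theorem~\ref{thm:lbds}, sets $\epsilon = M/(3N)$ and $C = M^{1+\delta}/N$, solves for $N = 3^{-1/\delta-1}C^{1/\delta}\epsilon^{-(1+\delta)/\delta}$, and reads off $\Omega(N)$ and $\Omega(\sqrt{N})$. Your explicit check that the hard instances satisfy $C_{1+\delta} \le C$ is a point the paper only handles via a remark, but otherwise the two proofs coincide.
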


\begin{remark}
    The classical portion of Theorem \ref{thm:lb} is also proven in Theorem 3.1 of \cite{devroye2016sub} in a different way. 
\end{remark}

\section{Towards a $\delta$-independent algorithm; Takeaway}

Our bounds have the following puzzling feature:

\emph{The lower bound argument in Section \ref{sec:lower} relies on a two-point construction that is independent of $\delta$. The $\epsilon$ associated with this particular construction is determined solely by its two parameters ($M$ and $N$), also independent of $\delta$. In other words, the extremal construction we give in the lower bound proof has nothing to do with $\delta$. The dependence on $\delta$ only emerges in Equations (\ref{eq:epseq}) and (\ref{eq:ceq}), when we express the above bound in terms of $C_{1+\delta}$. It turns out (see Equations (\ref{eq:crucial1}-\ref{eq:crucial3})) that the same construction yields bottleneck results for a variety of $\delta, \epsilon$ combinations. This naturally raises the question of how much the bound on $C_{1+\delta}$ is actually relevant to optimal algorithm design.}

The answer to this puzzle lies in a deeper understanding of what our result means.
We do not invent profoundly new algorithms, rather we show how to calculate the optimum setting 
of a threshold value, based on moment information about a random 
variable. It just happens so
that when the extra information given about $X$ is
its $(1+\delta)$-moment $0< \delta < 1$, this is
{\em sometimes} a better information than if we are given
the second moment. Namely, when the variable is sufficiently heavy tailed.

{\em Our algorithm makes a single decision based on $C\ge C_{1+\delta}$, namely it sets the 
upper cutoff to be $y = \left(\frac{8C}{\epsilon}\right)^{\frac{1}{\delta}}$.
After that it just runs \cite{kothari2023mean} on the truncated variable.}

To illuminate what this accomplishes, and when this is a good thing, 
consider the example of the two-point distribution. Let $M,N>1$ be integers, and define $\calA$ via:
\[
\bP[v(\calA)=0] = 1 - \frac{1}{N}   \;\;\;\;\;\;\;\;\;\;\; \bP[v(\calA)=M] = \frac{1}{N}
\]
Let us also set $\epsilon = \frac{M}{N}$. This value for $\epsilon$ represents an important threshold: when $\epsilon$ falls below the $\Theta(M/N)$, it is when the small probability event $v(\calA)=M$ starts to be significant when $\epsilon$-approximating the mean. For simplicity we set $C=C_{1+\delta}$. We have
\begin{align}\label{eq:crucial1}
C_{1+\delta} & = \frac{M^{1+\delta}}{N} \\\label{eq:crucial2}
{\rm cutoff \; value} \; = \; y \; = \; \left(\frac{8C_{1+\delta}}{\epsilon}\right)^{\frac{1}{\delta}} & = \Theta(M) \\\label{eq:crucial3}
{\rm number \; of \; samples} \; = \; t \; = \; 
\Theta\left(C_{1+\delta}^{1/2\delta} \epsilon^{-(1+\delta)/2\delta}\right)
& = \Theta(\sqrt{N})
\end{align}
Observe that the parameters that are crucial to how we run the algorithm, namely 
$y$ and $t$, do not contain $\delta$. Also, since the number of samples is $\Theta(\sqrt{N})$, 
we see that $\delta$ does not play a role in our running time analysis
either. 

Let us now consider the general case, where 
for some arbitrary $L>0$ we express $\epsilon$ as
$\epsilon = \frac{M}{LN}$. 
The case $L=1$ was the case above. Below we are going to discuss the 
$L> 1$ and the $L<1$ cases. The general formulas are:
\begin{align}
y  & = \Theta(L^{\frac{1}{\delta}}M) \\
t & = \Theta\left(L^{\frac{1+\delta}{2\delta}}\sqrt{N}\right)
\end{align}
Here the value of $\delta$ {\em does} play a role in setting the parameters of the corresponding algorithm. The sample size lower bound for
the combination $(C_{1+\delta}, \epsilon) = \left(\frac{M^{1+\delta}}{N}, \frac{M}{LN}
\right)$ in Section \ref{sec:lower} perhaps surprisingly, relies not on $M$ and $N$, but on the two-point distribution with parameters
$M' = L^{\frac{1}{\delta}}M$, $N' = L^{\frac{1+\delta}{\delta}}N$.
This hints at a potential improvement, {\em but only of course
if we were to be given further/different information about $\calA$ in advance.}

Indeed, let us first look at the $L>1$ case, and let us see what kind of prior information 
about $\calA$ would yield a better sample complexity.
We first compute $C_{2}=
\Theta\left(\frac{M^2}{N}\right)$. Assuming 
$C_{2}$ is the only piece 
of information we know about $\calA$ (as opposed to $C_{1+\delta}$), using the estimate of 
\cite{montanaro2015quantum}, we find that the sample size should be set as:
\[
\Theta\left(\frac{\sqrt{C_{2}}}{\epsilon}\right)
= \Theta\left(
\frac{M}{\sqrt{N}} \frac{LN}{M}
\right)
= \Theta\left(L\sqrt{N}
\right)
\]
Thus, if $L>>1$, at least in the case of the above two-point distribution, we are at a disadvantage when our algorithm is designed based on knowledge of $C_{1+\delta}$ rather than $C_{2}$:
\begin{equation}\label{eq:compare}
\underbrace{\Theta\left(L^{\frac{1+\delta}{2\delta}}\sqrt{N}\right)}_{\rm sample\; size, \; when \; C_{1+\delta} \; is \; known} \;\; {\rm vs.} \;\; \underbrace{\Theta\left(L\sqrt{N}\right)}_{\rm sample\; size, \; when\; C_{2} \; is \; known }
\end{equation}
Let us now consider the remaining scenario, where $L<1$. In this case, one should simply output 0 sampling nothing whatsoever
(since the mean is smaller than the error-threshold, $\epsilon$), the algorithm relying on $C_{1+\delta}$ is again sub-optimal. However, there is a catch: Without having any prior information about $\calA$, we do not know that we are in the good situation of not having to sample. 

Let us compare again two cases of prior information: 
when only the second moment ($C_{2}$) and when only $C_{1+\delta}$ is 
available about $\calA$. Looking at Equation (\ref{eq:compare}), we see that in the $L<1$ case basing 
our algorithm on knowing $C_{1+\delta}$ (rather than $C_{2}$) yields the smaller sample size. This is the only 
region for $L$, where setting the sample size according to our information about $C_{1+\delta}$ rather than $C_{2}$ is more advantageous. {\em The explanation is that having a bound $C_{1+\delta}$ tells our algorithm in a stronger sense than a bound on $C_{2}$ that there is nothing to look for, and so the algorithm can stop the search after fewer trials.}

Let $y\in \mathbb{R}$ be 
such that $\bE[v(\calA_{\ge y})]\le \epsilon/8$, and assume that $v(\calA)\ge 0$.
Although general distributions are not two-point, we can 
approximate $v(\calA)$ with a sum of $\log_{2}\frac{8y}{\epsilon}$
two-point distributions.
Subdivide the $[\frac{\epsilon}{8},y]$
interval into 
sub-intervals of exponentially increasing length. More precisely, let
$x_{0}=\frac{\epsilon}{8}$, $x_{i+1} = 2 x_{i}$ for $0\le i < K$, where $y< x_{K}\le 2y$. 
Let $q_{i} = \prob{\mathbb I(x_{i-1} \le \calA < x_i)}$.
The random variable $\calA_{x_{i-1},x_{i}}$ 
is 0 outside $\mathbb I(x_{i-1} \le \calA < x_i)$ and it
takes values between $x_{i-1}$ and $x_{i} = 2x_{i-1}$, so it is like a two-point distribution with 
probabilities $q_i$ and $1-q_{i}$. Writing these parameters in terms of the
$M$, $N$ parameters we have been using throughout this section, we have that 
$\calA_{x_{i-1},x_{i}}$ is a constant factor approximation of the two-point 
distribution where $N_{i} = 1/q_{i}$, $M_{i} = x_{i}$. We can also write 
$L_{i} = \frac{M_{i}}{\epsilon N_{i}}$. For a general random 
variable, using the above notation, we can informally say that 
our algorithm performs well based on the lower moment 
information when many $L_{i}$s are much smaller than 1.
This intuitive statement does not make any reference to $\delta$. Therefore, we may conjecture that knowledge of any $\delta$ moment can be replaced by the knowledge of $y, q_{1}, \ldots, q_{K}$. We capture this idea in the following propositions:

\begin{prop}\label{prop:nodelta}
Let $v(\calA)$ be a code to a random variable
that takes only non-negative values. Let $y\in \mathbb{R}$ be 
such that $\bE[v(\calA_{\ge y})]\le \epsilon/8$.
Let $x_{0}=\frac{\epsilon}{8}$, $x_{i+1} = 2 x_{i}$ for $0\le i < K$, where $y< x_{K}\le 2y$. 
Let
\[
q_{i} = \bP[v(\calA_{x_{i},x_{i+1}}) > 0]
\]
Assume that the sequence $q_{1},\ldots,q_{K}$ and $y$ are given to the user. 
Let $\nu_{i}$ indicate if $q_{i}x_{i}$ is greater than $\frac{\epsilon}{100K}$, i.e.
$\nu_{i} = 1$ if $q_{i}x_{i} >
\frac{\epsilon}{100K}$, otherwise it is zero.
Then there is quantum Monte Carlo sampler
with sample size 
\[
\tilde \Theta\left( \frac{1}{\epsilon}\sum_{i=1}^{K} \nu_{i} \, x_{i} q_{i}^{\frac{1}{2}}  \right)
\]
Here the tilde hides logarithmic factors in $y$ and $1/\epsilon$.
\end{prop}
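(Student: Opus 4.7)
Since $v(\calA) \ge 0$, I would split
\[
\bE[v(\calA)] \;=\; \bE[v(\calA_{<x_0})] \;+\; \sum_{i=0}^{K-1}\bE[v(\calA_{x_i,x_{i+1}})] \;+\; \bE[v(\calA_{\ge y})].
\]
The tail term is bounded by $\epsilon/8$ by hypothesis, and the lowest piece satisfies $\bE[v(\calA_{<x_0})] \le x_0 = \epsilon/8$, so both extremes can safely be replaced by $0$, burning $\epsilon/4$ of the error budget. The plan is to estimate each surviving dyadic piece separately using Kothari--O'Donnell (Theorem \ref{thm:KO}), then sum the outputs together with the discarded terms.

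\textbf{Per-piece estimation.} On the $i$-th piece, $v(\calA_{x_i,x_{i+1}})$ is supported on $[x_i,x_{i+1})$ with total mass $q_i$ (which is given) and is $0$ elsewhere, so
\[
\sigma_i^2 \;\le\; \bE[v(\calA_{x_i,x_{i+1}})^2] \;\le\; x_{i+1}^2 q_i \;=\; 4 x_i^2 q_i,
\]
giving $\sigma_i \le 2 x_i q_i^{1/2}$. Targeting per-piece additive error $\epsilon/(8K)$, Theorem \ref{thm:KO} uses $n_i = \calO(K x_i q_i^{1/2}/\epsilon)$ queries to succeed with probability $0.9$, boosted to $1-0.01/K$ via an $\calO(\log K)$-fold application of the powering lemma. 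A union bound across the at most $K$ estimated pieces then yields overall success probability at least $0.99$, and the summed per-piece errors are at most $\epsilon/8$.

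\textbf{Skipping $\nu_i = 0$ pieces and totalling.} Whenever $q_i x_i \le \epsilon/(100K)$ one has $\bE[v(\calA_{x_i,x_{i+1}})] \le x_{i+1} q_i = 2 x_i q_i \le \epsilon/(50K)$, so estimating all such pieces by $0$ costs zero samples and contributes at most $\epsilon/50$ to the total error. Summing the four error sources (small part, tail, skipped pieces, estimated pieces) keeps the total below $\epsilon$, and the total sample count is
\[
\sum_{i:\,\nu_i=1} n_i \cdot \calO(\log K) \;=\; \tilde\calO\Bigl(\tfrac{1}{\epsilon}\sum_{i=1}^{K}\nu_i\, x_i\, q_i^{1/2}\Bigr),
\]
with $K = \calO(\log(y/\epsilon))$ and the powering factor both absorbed into $\tilde\Theta$. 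The only delicate step is calibrating the negligibility threshold $\epsilon/(100K)$ so that the summed contribution of skipped pieces, the summed KO errors on surviving pieces, and the endpoint truncations all simultaneously fit inside the total error budget $\epsilon$; the rest is a routine union bound and bookkeeping on logarithmic factors.
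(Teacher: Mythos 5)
Your proposal is correct and follows essentially the same route as the paper's (much terser) proof: discard the sub-$x_0$ piece, the tail, and the $\nu_i=0$ pieces using the error budget, and run Kothari--O'Donnell on each surviving dyadic piece with second moment $\calO(x_i^2 q_i)$, per-piece accuracy $\Theta(\epsilon/K)$, and success probability boosted by the powering lemma so that the extra $K$ and $\log K$ factors hide in the $\tilde\Theta$. The only nit is that your decomposition should end with $\bE[v(\calA_{\ge x_K})]$ rather than $\bE[v(\calA_{\ge y})]$ to avoid double-counting $[y,x_K)$, but since $x_K>y$ this term is still $\le\epsilon/8$ and nothing changes.
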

\begin{proof}
First notice that if the user knows $q_{1},\ldots,q_{K}$ and $y$, then they also know $K$,
$x_{i}$ ($0\le i\le K$) and $\nu_{i}$ ($1\le i\le K$).
The second thing to notice is that for those intervals $[x_{i-1},x_{i}]$ 
with $\nu_{i}=0$
the expectation $\bE[(v(\calA_{x_{i-1},x_{i}})]$
represents an irrelevantly small part of $\bE[(v(v(\calA)]$, even when 
we sum these up for all such $i$.
 Finally, for $i$'s with $\nu_{i}=1$ one can sufficiently accurately estimate the contribution of
 $\bE[v(\calA_{x_{i-1},x_{i}})]$, using \cite{kothari2023mean}.
 To compute the sample complexity associated with this algorithm 
 note that $\bE[(v(\calA_{x_{i-1},x_{i}}))^2] = \calO(x_{i}^2 q_{i})$, so
 \[
 \epsilon^{-1} \sqrt{\bE[(v(\calA_{x_{i-1},x_{i}}))^2]} = \epsilon^{-1} x_{i}\sqrt{q_{i}}
 \]
 but to improve on the 
 accuracy of the estimate (we need accuracy $\frac{\epsilon}{100K}$),
 and on its success probability (we need success probability $1-\frac{1}{100K}$)
 we loose logarithmic factors in $y$ and $1/\epsilon$ 
 (in particular, $K = \lceil \log_{2} \frac{8y}{\epsilon} \rceil$).
\end{proof}

\begin{prop}\label{prop:best}
The sample size given in Proposition \ref{prop:nodelta} is at least as good,
aside from logarithmic factors, as of any algorithm designed only with 
the knowledge of $C_{1+\delta}$ alone. Here we assume, 
$y = \left(\frac{8C_{1+\delta}}{\epsilon}\right)^{\frac{1}{\delta}}$,
but it should be noted that generally choosing a value for $y$ is not problematic, as the running time depends only logarithmically on it. 
\end{prop}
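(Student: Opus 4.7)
The plan is to show that, for every random variable $v(\calA)$ satisfying $\bE[v(\calA)^{1+\delta}]\le C_{1+\delta}$, the sample size $\frac{1}{\epsilon}\sum_{i=1}^{K}\nu_{i}\,x_{i}\,q_{i}^{1/2}$ of Proposition \ref{prop:nodelta} is at most $\tilde O\!\left(C_{1+\delta}^{1/2\delta}\epsilon^{-(1+\delta)/2\delta}\right)$. Since the quantum lower bound of Section \ref{sec:lower} forces any algorithm that relies only on knowledge of $C_{1+\delta}$ to use at least that many samples in the worst case, such an instance-wise upper bound would immediately establish the proposition.

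\textbf{Key steps.} With $y=(8C_{1+\delta}/\epsilon)^{1/\delta}$ as prescribed, I would first turn the moment hypothesis into a condition on the bucket masses $q_{i}$. Since $v(\calA)\ge x_{i-1}$ on the event $\{x_{i-1}\le v(\calA)<x_{i}\}$ of probability $q_{i}$ and since $x_{i}=2x_{i-1}$,
\[
\sum_{i=1}^{K} x_{i}^{1+\delta}\,q_{i}\;\le\;2^{1+\delta}\,\bE\!\left[v(\calA)^{1+\delta}\right]\;\le\;2^{1+\delta}\,C_{1+\delta}.
\]
Next, applying Cauchy--Schwarz to the factorization $x_{i}q_{i}^{1/2}=\bigl(x_{i}^{(1+\delta)/2}q_{i}^{1/2}\bigr)\cdot x_{i}^{(1-\delta)/2}$,
\[
\Bigl(\sum_{i=1}^{K}\nu_{i}\,x_{i}\,q_{i}^{1/2}\Bigr)^{\!2}\;\le\;\Bigl(\sum_{i=1}^{K} x_{i}^{1+\delta}\,q_{i}\Bigr)\Bigl(\sum_{i=1}^{K} x_{i}^{1-\delta}\Bigr)\;=\;O(C_{1+\delta})\cdot\sum_{i=1}^{K} x_{i}^{1-\delta}.
\]
For $\delta\in(0,1)$ the right-hand geometric series has ratio $2^{1-\delta}>1$ and is therefore dominated, up to a $\delta$-dependent constant, by its largest term $x_{K}^{1-\delta}=\Theta(y^{1-\delta})$. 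Substituting $y^{1-\delta}=y\cdot y^{-\delta}=y\cdot\epsilon/(8C_{1+\delta})$ collapses the bound to $O(\epsilon y)$, yielding $\sum_{i}\nu_{i}x_{i}q_{i}^{1/2}=O(\sqrt{\epsilon y})$ and hence a sample size of $\tilde O(\sqrt{y/\epsilon})$. Re-expressing $y$ in terms of $C_{1+\delta}$ and $\epsilon$ recovers exactly $\tilde O\!\left(C_{1+\delta}^{1/2\delta}\epsilon^{-(1+\delta)/2\delta}\right)$, matching the lower bound from Section \ref{sec:lower} and completing the comparison.

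\textbf{Main obstacle.} The only delicate point is the boundary case $\delta=1$, where $\sum_{i} x_{i}^{1-\delta}=K=\Theta(\log(y/\epsilon))$ is no longer dominated by a single term and contributes a full logarithmic factor; this factor is exactly the sort already hidden in the $\tilde\Theta$ of Proposition \ref{prop:nodelta}, so it does not degrade the conclusion. The indicators $\nu_{i}$ pose no difficulty since they only shrink the sum, so the Cauchy--Schwarz bound applies to $\sum_{i}\nu_{i}x_{i}q_{i}^{1/2}$ unchanged; the buckets dropped by $\nu_{i}=0$ contribute at most the $\epsilon/(100K)$ error per bucket already budgeted into the design of Proposition \ref{prop:nodelta}. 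Finally, the freedom in the exact choice of $y$ is harmless because, as the statement reminds us, the running time depends only logarithmically on $y$.
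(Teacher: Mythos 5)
Your proposal is correct, and it takes a genuinely different (and more self-contained) route than the paper. The paper does not give a formal proof of Proposition~\ref{prop:best}: it appeals to the preceding discussion, treating each dyadic bucket $\calA_{x_{i-1},x_i}$ as a two-point distribution with parameters $M_i=x_i$, $N_i=1/q_i$, $L_i=x_iq_i/\epsilon$, and arguing bucket-by-bucket from the earlier comparison of sample sizes under $C_{1+\delta}$- versus $C_2$-knowledge that the per-bucket ideal cost $L_i\sqrt{N_i}=\epsilon^{-1}x_i\sqrt{q_i}$ never exceeds the $C_{1+\delta}$-based cost once the $\nu_i$ filter has excluded the $L_i\ll 1$ buckets. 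You instead prove the clean global estimate $\frac{1}{\epsilon}\sum_i\nu_i x_i q_i^{1/2}=\tilde O\bigl(C_{1+\delta}^{1/2\delta}\epsilon^{-(1+\delta)/2\delta}\bigr)$ directly: the moment hypothesis gives $\sum_i x_i^{1+\delta}q_i\le 2^{1+\delta}C_{1+\delta}$, Cauchy--Schwarz with the split $x_iq_i^{1/2}=(x_i^{(1+\delta)/2}q_i^{1/2})\cdot x_i^{(1-\delta)/2}$ reduces everything to the geometric series $\sum_i x_i^{1-\delta}=O_\delta(y^{1-\delta})$, and the choice $y=(8C_{1+\delta}/\epsilon)^{1/\delta}$ collapses the product to $O(\epsilon y)$; combining with the worst-case lower bound of Theorem~\ref{thm:lb} for $C_{1+\delta}$-only algorithms finishes the argument. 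The exponent arithmetic checks out, the $\delta=1$ boundary case is correctly absorbed into the $\tilde\Theta$, and your handling of the $\nu_i$ is right (they only shrink the sum). What your approach buys is a rigorous, quantitative bound that does not lean on the informal ``constant-factor two-point approximation'' heuristic; what the paper's approach buys is the finer, instance-level intuition about which buckets ($L_i\ll 1$ versus $L_i\gg 1$) are responsible for the gap between the two kinds of prior information. One small caveat worth stating explicitly if you write this up: the comparison is against the \emph{worst-case} sample budget that any $C_{1+\delta}$-only algorithm must commit to (which is what Theorem~\ref{thm:lb} controls), not an instance-by-instance comparison against hypothetical adaptive algorithms; this matches the paper's intended reading.
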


This proposition was implicitly proven in this section, since every $\calA_{x_{i-1},x_{i}}$ ($1\le i\le K$)
can be treated as a two-point distribution, which we have fully analysed.
The fact that we neglected the terms $\bE[v(\calA_{x_{i-1},x_{i}})]$ where $\nu_{i}=0$
ensures that we never run into the $L_i << 1$ case. In fact, when information 
on all moments (including the zeroth) are given, the following can be can be proven:

\begin{prop}\label{prop:verybest}
Whenever $M\ge 1$ the following is true. Assume the task is to $\epsilon$-approximate a random variable.
Then, the sample size given in Proposition \ref{prop:nodelta} is at least as good,
aside from logarithmic factors, as of any algorithm that
prior to the running receives a function $C(\delta)$ defined on $0\le \delta\le 1$
with the property that $C(\delta)$ is an upper estimate on $C_{1+\delta}$, and 
$C(\delta)> 3 \epsilon M^{\delta}$.
\end{prop}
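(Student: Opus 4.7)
The plan is to piggyback on the two-point hard-instance construction of Section \ref{sec:lower}, applied strip-by-strip along the dyadic decomposition of Proposition \ref{prop:nodelta}. The key observation is that the hard family of Section \ref{sec:lower}, with parameters $(M,N)$, produces the same moment value $C_{1+\delta} = M^{1+\delta}/N$ for \emph{every} $\delta \in [0,1]$. Consequently, giving an algorithm the entire function $C(\cdot)$ rather than a single moment bound provides no additional information that could distinguish the planted two-point distribution from the all-zero distribution, and Theorem \ref{thm:lbds} still forces $\Omega(\sqrt{N})$ queries on this family.

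Using the hypothesis $C(\delta) > 3\epsilon M^{\delta}$ with $M \ge 1$ (where $M$ plays the role of a global scale, matching the parameter of the hard two-point instance), the inequality rewrites as $M/N > 3\epsilon$: we are in the regime $L > 3$ analyzed after Proposition \ref{prop:best}, where the mean genuinely exceeds the error tolerance and the trivial ``output $0$'' strategy is ruled out. In this regime the reduction of Lemma \ref{lem:simul} from classification to mean estimation goes through, so the $\tilde\Omega(\sqrt{N})$ quantum lower bound applies to any $C(\cdot)$-based algorithm running on this instance, matching the $\tilde\Theta(\sqrt{N})$ upper bound of Proposition \ref{prop:nodelta} in the two-point case.

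For a general $v(\calA)$, I would mirror the proof of Proposition \ref{prop:nodelta}: subdivide $[\epsilon/8, y]$ into dyadic strips $[x_{i-1}, x_i]$ and observe that the restricted variable $v(\calA_{x_{i-1}, x_i})$ is a constant-factor approximation of the two-point distribution with parameters $(x_i, q_i^{-1})$. Its moment function $x_i^{1+\delta} q_i$ matches the per-strip contribution to $C(\delta)$ up to constants, so any algorithm that succeeds on all distributions consistent with $C(\cdot)$ must in particular succeed on the hybrid obtained by replacing one strip at a time with its two-point analogue. Applying Theorem \ref{thm:lbds} to each such hybrid forces $\tilde\Omega(x_i q_i^{1/2}/\epsilon)$ queries on every strip with $\nu_i = 1$; the strips with $\nu_i = 0$ collectively contribute $\calO(\epsilon)$ to the mean and so require no samples. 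Summing reproduces the sample size of Proposition \ref{prop:nodelta}.

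The main obstacle is the quantum non-additivity of query lower bounds across disjoint strips: a single coherent query can touch several strips simultaneously, so per-strip bounds do not naively sum. I would sidestep this by constructing one composite Grover instance whose ground set is the disjoint union of the per-strip index sets $[1/q_i]$ and whose planted non-zero entries jointly encode the pairs $(x_i, q_i)$; this composite still satisfies the upper bound $C(\cdot)$, so any $C(\cdot)$-based algorithm must succeed on it, and a single application of the quantum search lower bound of Theorem \ref{thm:lbds} to the composite yields the full $\tilde\Omega(\epsilon^{-1}\sum_i \nu_i x_i q_i^{1/2})$ bound without any additivity argument.
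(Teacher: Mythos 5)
The paper itself offers no proof of Proposition~\ref{prop:verybest} (it is only asserted to be provable), so I am judging your argument on its own merits. It contains a genuine quantitative gap at its core. For the two-point instance with parameters $(M,N)$ and $\epsilon=M/(LN)$, the hypothesis $C(\delta)>3\epsilon M^{\delta}$ does place you in the regime $L>3$, but in that regime the sample size of Proposition~\ref{prop:nodelta} is $\tilde\Theta(L\sqrt{N})$, \emph{not} $\tilde\Theta(\sqrt{N})$: the single significant strip has $x_{i}=\Theta(M)$ and $q_{i}=1/N$, so $\epsilon^{-1}x_{i}q_{i}^{1/2}=\frac{LN}{M}\cdot\frac{M}{\sqrt{N}}=L\sqrt{N}$. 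Theorem~\ref{thm:lbds} applied to the family $\{f_{0},\dots,f_{N}\}$ with value $M$ yields only $\Omega(\sqrt{N})$, which misses the target by the factor $L$, and $L$ is bounded below by $3$ but not above by anything logarithmic. The same miscount recurs in your strip-by-strip step: Theorem~\ref{thm:lbds} applied to the strip-$i$ hybrid forces $\Omega(q_{i}^{-1/2})$ queries, not the claimed $\Omega(x_{i}q_{i}^{1/2}/\epsilon)$; the two differ by $L_{i}=x_{i}q_{i}/\epsilon$. To get the right bound from the search lower bound alone you must plant a \emph{different, inflated} two-point family, roughly $M'=\Theta(LM)$ and $N'=M'/(3\epsilon)=\Theta(L^{2}N)$, check that its moments $3\epsilon(M')^{\delta}\le C(\delta)$ (which holds exactly when $L\ge 3$, since $C(\delta)\ge M^{1+\delta}/N=L\epsilon M^{\delta}$), and then read off $\Omega(\sqrt{N'})=\Omega(L\sqrt{N})$. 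This is precisely the subtlety the paper flags in Section~7: the lower bound for a given $(C,\epsilon)$ ``relies not on $M$ and $N$, but on the two-point distribution with parameters $M'=L^{1/\delta}M$, $N'=L^{(1+\delta)/\delta}N$.'' The alternative route --- perturbing the probability of the existing spike by $\Theta(\epsilon/M)$ --- would need an approximate-counting (Nayak--Wu type) lower bound, which is not among the tools stated in the paper and which you do not invoke.

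Your closing construction also does not do what you claim: Theorem~\ref{thm:lbds} is a statement about detecting a \emph{single} marked element in a ground set, and ``a single application'' of it to a composite instance with many planted entries does not output a sum of the form $\epsilon^{-1}\sum_{i}\nu_{i}x_{i}q_{i}^{1/2}$; you would need a direct-sum or composition theorem you neither state nor prove. Happily, the additivity problem you are trying to circumvent is moot here: the proposition tolerates logarithmic factors and there are only $K=O(\log(y/\epsilon))$ strips, so the sum is within a factor $K$ of its largest term, and it suffices to establish the lower bound for the single heaviest strip using the corrected (inflated) two-point family described above.
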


Despite that Proposition \ref{prop:nodelta} always gives better sample size than
Theorem \ref{thm:quantum-heavy}, calculating the dependence of the sample size 
with respect to the 
knowledge of $(1+\delta)^{th}$ moment alone has merits. 
First, it is not always realistic to expect that all the numbers $q_{1},\ldots,q_{K}$ are known to the user, 
but knowledge of some moment is often a typical scenario. 
The results in this section also shows that 
knowledge of different moments often do not aggregate; we should just 
pick the information that gives the smallest cutoff in Theorem \ref{thm:quantum-heavy}.

The main takeaway is that bounding the sample complexity 
relies on our prior knowledge of the variable, which can affect 
both the analysis and the algorithm itself.

\subsection{The Pareto distribution -- a case study}

The Pareto distribution \cite{johnson1994continuous}
is a continuous probability distribution often used to model heavy-tailed data, such as wealth or income distributions. It is characterized by two parameters, $x_{\text{min}} > 0$ and $\alpha > 0$, and is defined by the following probability density function (PDF):

\[
f(x; x_{\text{min}}, \alpha) = \begin{cases}
\frac{\alpha x_{\text{min}}^{\alpha}}{x^{\alpha+1}} & \text{for } x \geq x_{\text{min}}, \\
0 & \text{otherwise}.
\end{cases}
\]

The cumulative distribution function (CDF) for the Pareto distribution is given by:

\[
F(x; x_{\text{min}}, \alpha) = \begin{cases}
1 - \left(\frac{x_{\text{min}}}{x}\right)^{\alpha} & \text{for } x \geq x_{\text{min}}, \\
0 & \text{otherwise}.
\end{cases}
\]

The moments of the Pareto distribution can be calculated using the following formula for the $p = 1+\delta$-th moment:

\[
C_p = \int_{x_{\text{min}}}^{\infty} x^p f(x; x_{\text{min}}, \alpha) dx = \frac{\alpha x_{\text{min}}^p}{\alpha - p}
\]

For the above to be finite, the shape parameter $\alpha$ must be greater than $p$. 
To define our distribution, $v(\calA)$, we choose
\[
x_{\text{min}} := 1, \;\;\;\;\;\;\;\;\;  1 \;\;\; < \;\;\; 1 + \delta \;\;\; = \;\;\; p \;\;\; < \;\;\; \alpha
\;\;\; \le \;\;\; 2
\]
With these choices the second moment is infinite and
\begin{align}
\bE[v(\calA)] & = C_1 = \frac{\alpha}{\alpha - 1} \\
C_{1+\delta} & = \frac{1+\delta}{\delta} \; = \; \Theta(1);
\;\;\;\;\;\mbox{We will assume} \; C = C_{1+\delta}
\end{align}
We are going to investigate how our algorithm performs in terms of the sample size
as a function of $\epsilon$. Let us first calculate the cutoff, both the theoretically 
optimal and what our algorithm gives.

\medskip

\begin{tabular}{llc}
  The smallest $y_{0}$ such that $\bE[v(\calA_{\ge y_{0}})] 
  \le \epsilon$:   &  & $\Theta\left(\epsilon^{\frac{-1}{\alpha-1}}\right)$ \\
  & & 
  (Due to $\int_{y_{0}}^{\infty} x \frac{\alpha}{x^{\alpha+1}} dx
  = \Theta(y_0^{1-\alpha}) = \epsilon$.)\\[6pt]
  \multicolumn{3}{l}{But this is not the threshold our algorithm finds.} \\[6pt]
  The $y$ (cutoff) value that our algorithm finds:   &  & $\Theta\left(\left(\frac{C_{1+\delta}}{\epsilon}\right)^{\frac{1}{\delta}}\right) = \Theta(\epsilon^{-\frac{1}{\delta}})$
\end{tabular}

\medskip

We now start the preparation of comparing the sample complexity of three algorithms.

\medskip

\begin{center}
\begin{tabular}{llp{4.5in}}
$\Qmc(\epsilon, 0.8)$  &  & The algorithm in Theorem \ref{thm:quantum-heavy} \\
$\calA_{\rm ideal}(\epsilon)$  &  & The algorithm in Proposition \ref{prop:nodelta} \\
$\calA_{\rm KO}(\epsilon, \kappa)$  &  & The algorithm of Kothari and O'Donnell in \cite{kothari2023mean}, when run on $\calA_{0,\kappa}$. (For any fair comparison $\kappa$ must be at least $\epsilon^{\frac{-1}{\alpha-1}}$,
because that's where $\bE[v(\calA_{\ge\kappa})]<O(\epsilon)$.) \\
\end{tabular}
\end{center}

\medskip

The sample complexity of $\Qmc(\epsilon, 0.8)$, given by Theorem \ref{thm:quantum-heavy} is
\[
\calO\left(C^{1/2\delta} \epsilon^{-(1+\delta)/2\delta}\right) 
= \calO\left( \epsilon^{-(1+\delta)/2\delta}\right) 
\]
To calculate
the sample complexity of $\calA_{\rm ideal}(\epsilon)$, we first
compute $q_{1},\ldots,q_{K}$ as they are defined in Proposition \ref{prop:nodelta}. Let
$i_{0}= \left\lfloor \log_{2} \frac{8}{\epsilon} \right\rfloor$. Then
\begin{align}
q_{1} & = \ldots = q_{i_{0}-1}  = 0 \\
0 & \le q_{i_{0}} = \calO(1) \\
q_{i} & = \; \int_{x_{i}/2}^{x_{i}}\frac{\alpha \, dx}{x^{\alpha+1}}
\; = \; \Theta(x_{i}^{-\alpha}) \;\;\;\;\;\; {\rm for} \;\;\;\;\;\; i_{0} + 1 \le i \le K
\end{align}
To compute $\nu_i$ of Proposition \ref{prop:nodelta} we need $q_{i}x_{i}$.
When $i< i_{0}$ then $q_{i}x_{i}=0$. 
For $i = i_{0}$ we have $q_{i}x_{i}=0$.
Finally, when $i>i_{0}$ then
\[
q_{i}x_{i} \; = \; \Theta(x_{i}^{-\alpha}) \cdot x_{i} \; = \; \Theta(x_{i}^{1-\alpha})
\]
Utilizing ideas from the previous section, we understand that when $q_{i}x_{i} < \tilde\Theta(\epsilon)$, this is when the interval $[x_{i-1}, x_{i}]$ ceases to be significant. 
This happens when $x_{i} > \tilde\Theta(\epsilon^{\frac{-1}{\alpha-1}})$
(not surprisingly, since $\epsilon^{\frac{-1}{\alpha-1}}$ is the optimal cutoff value too).
The value of $\nu_i$ is also 0 for $i < i_{0}$, and possibly for $i = i_{0}$. 
Thus the sample complexity of $\calA_{\rm ideal}(\epsilon)$ is
\[
\tilde \calO\left(\frac{1}{\epsilon}
\sum_{i=i_{0}+1}^K  x_{i} q_{i}^{\frac{1}{2}} \right) \; = \;
\tilde \calO\left(\frac{1}{\epsilon}
\sum_{i=i_{0}+1}^K  x_{i}^{1-\frac{\alpha}{2}} \right)
\; = \; \tilde\Theta\left(\frac{1}{\epsilon}\epsilon^{\frac{-1}{\alpha-1}\frac{2-\alpha}{2}}\right)
\; = \; \tilde\Theta\left(\epsilon^{\frac{-\alpha}{2\alpha-2}}\right)
\]

Finally, the sample complexity of
$\calA_{\rm KO}(\epsilon, \kappa)$ 
is $\epsilon^{-1} \sqrt{\bE[|v(\calA_{0,\kappa})|^{2}] }$.
We calculate, that
\[
\bE[|v(\calA_{0,\kappa})|^{2}] =  \int_{1}^{\kappa} x^2 \frac{\alpha}{x^{\alpha+1}} dx
  = \Theta(\kappa^{2-\alpha})
\]
  and so the sample complexity becomes
  $\epsilon^{-1} \sqrt{\kappa^{2-\alpha}}$. We summarize:

\begin{center}
\begin{tabular}{l|c}
   Algorithm  &  Sample complexity \\\hline
 $\Qmc(\epsilon, 0.8)$    &   $ \Theta\left(\epsilon^{-\frac{1+\delta}{2\delta}}\right)$  \\[4pt]
 $\calA_{\rm ideal}(\epsilon)$ & $\tilde\Theta\left( \epsilon^{\frac{-\alpha}{2\alpha-2}}\right)$ \\[4pt]
 $\calA_{\rm KO}(\epsilon, \epsilon^{\frac{-1}{\alpha-1}})$ & 
 $ \Theta\left( \epsilon^{-\frac{\alpha}{2(\alpha-1)}}\right)$ \\[4pt]
 $\calA_{\rm KO}(\epsilon, \epsilon^{-\frac{1}{\delta}})$ & 
 $ \Theta\left( \epsilon^{-\frac{2-\alpha+2\delta}{2\delta}}\right)$ \\
\end{tabular}
\end{center}

\medskip

\noindent{\bf Conclusion:} 
Let us disregard logarithmic and constant factors. We first observe that $\mathcal{A}_{\mathrm{ideal}}(\epsilon)$ has the same sample comlexity as $\mathcal{A}_{\mathrm{KO}}(\epsilon, \epsilon^{\frac{-1}{\alpha-1}})$ ($\mathcal{A}_{\mathrm{KO}}$ with the optimal cutoff). This is not surprising, as $\mathcal{A}_{\mathrm{ideal}}$ decomposes the sampling task into logarithmically many intervals, and the heaviest interval (in terms of the second moment) dominates when logarithmic factors are not considered. Informing the algorithm of the correct cutoff is essentially equivalent to specifying the last heaviest interval. Both of these algorithms have naturally better sample complexity than the other two.

Interestingly, due to $1+\delta < \alpha$, the algorithm $\mathcal{QMC}(\epsilon, 0.8)$ exhibits worse sampling complexity than $\mathcal{A}_{\mathrm{KO}}(\epsilon, \epsilon^{-\frac{1}{\delta}})$, even when 
both algorithms just run \cite{kothari2023mean}
on the truncated version of the random variable with the same truncation threshold. This is because $\mathcal{A}_{\mathrm{KO}}(\epsilon, \epsilon^{-\frac{1}{\delta}})$ is informed of the second moment, whereas $\mathcal{QMC}(\epsilon, 0.8)$ must rely solely on an estimate provided by Theorem \ref{thm:quantum-heavy}. It has to use this estimate because it only has information about the $(1+\delta)$-moment. This illustrates how our knowledge about the variable influences the sample complexity when our goal is to achieve a fixed additive error.

\bibliographystyle{alpha}
\bibliography{ref}

\newcommand{\etalchar}[1]{$^{#1}$}
\begin{thebibliography}{BvDJ{\etalchar{+}}20}

\bibitem[AG07]{Asmussen2007StochasticSA}
S{\o}ren Asmussen and Peter~W. Glynn.
\newblock Stochastic simulation: Algorithms and analysis.
\newblock 2007.

\bibitem[BBBV97]{Bennett1997Strengths}
Charles~H. Bennett, Ethan Bernstein, Gilles Brassard, and Umesh Vazirani.
\newblock Strengths and weaknesses of quantum computing.
\newblock {\em SIAM Journal on Computing}, 26(5):1510--1523, 1997.

\bibitem[Bel19]{Belovs2019QuantumAF}
Aleksandrs Belovs.
\newblock Quantum algorithms for classical probability distributions.
\newblock In {\em Embedded Systems and Applications}, 2019.

\bibitem[BG81]{Bennett1981Relative}
Charles~H. Bennett and John Gill.
\newblock Relative to a random oracle a, \(p^a \neq np^a \neq co-np^a\) with
  probability 1.
\newblock {\em SIAM Journal on Computing}, 10(1):96--113, 1981.

\bibitem[BHMT02]{brassard2002quantum}
Gilles Brassard, Peter Hoyer, Michele Mosca, and Alain Tapp.
\newblock Quantum amplitude amplification and estimation.
\newblock {\em Contemporary Mathematics}, 305:53--74, 2002.

\bibitem[BHT98]{brassard1998quantum}
Gilles Brassard, Peter H{\o}yer, and Alain Tapp.
\newblock Quantum counting.
\newblock In {\em Automata, Languages and Programming: 25th International
  Colloquium, ICALP'98 Aalborg, Denmark, July 13--17, 1998 Proceedings 25},
  pages 820--831. Springer, 1998.

\bibitem[BvDJ{\etalchar{+}}20]{bouland2020prospects}
Adam Bouland, Wim van Dam, Hamed Joorati, Iordanis Kerenidis, and Anupam
  Prakash.
\newblock Prospects and challenges of quantum finance.
\newblock {\em arXiv preprint arXiv:2011.06492}, 2020.

\bibitem[CLL{\etalchar{+}}22]{childs2022quantum}
Andrew~M Childs, Tongyang Li, Jin-Peng Liu, Chunhao Wang, and Ruizhe Zhang.
\newblock Quantum algorithms for sampling log-concave distributions and
  estimating normalizing constants.
\newblock {\em Advances in Neural Information Processing Systems},
  35:23205--23217, 2022.

\bibitem[DLLO16]{devroye2016sub}
Luc Devroye, Matthieu Lerasle, Gabor Lugosi, and Roberto~I Olivekra.
\newblock Sub-gaussian mean estimators.
\newblock {\em The Annals of Statistics}, 44(6):2695--2725, 2016.

\bibitem[GDM{\etalchar{+}}01]{gencay2001}
R.~Gencay, M.~Dacorogna, U.~Muller, O.~Pictet, and R.~Olsen.
\newblock {\em An Introduction to High-Frequency Finance}.
\newblock Academic Press, London, 2001.

\bibitem[Gro96]{grover1996fast}
Lov~K Grover.
\newblock A fast quantum mechanical algorithm for database search.
\newblock In {\em Proceedings of the twenty-eighth annual ACM symposium on
  Theory of computing}, pages 212--219, 1996.

\bibitem[GS10]{GS2010}
Michael Grabchak and Gennady Samorodnitsky.
\newblock Do financial returns have finite or infinite variance? a paradox and
  an explanation.
\newblock {\em Quantitative Finance}, 10(8):883--893, 2010.

\bibitem[Gut05]{gut2005probability}
Allan Gut.
\newblock {\em Probability: a graduate course}, volume 200.
\newblock Springer, 2005.

\bibitem[Ham21]{Hamoudi2021Quantum}
Yassine Hamoudi.
\newblock {\em Quantum Algorithms for the Monte Carlo Method}.
\newblock PhD thesis, Universit{\'e} de Paris, 2021.

\bibitem[Hei03]{heinrich2003monte}
Stefan Heinrich.
\newblock From monte carlo to quantum computation.
\newblock {\em Mathematics and Computers in Simulation}, 62(3-6):219--230,
  2003.

\bibitem[HM19]{hamoudi2019quantum}
Yassine Hamoudi and Fr{\'e}d{\'e}ric Magniez.
\newblock Quantum {Chebyshev's Inequality and Applications}.
\newblock In {\em 46th International Colloquium on Automata, Languages, and
  Programming (ICALP 2019)}, 2019.

\bibitem[iS99]{Sato1999LvyPA}
Ken iti Sato.
\newblock L{\'e}vy processes and infinitely divisible distributions.
\newblock 1999.

\bibitem[JKB94]{johnson1994continuous}
Norman~L. Johnson, Samuel Kotz, and N.~Balakrishnan.
\newblock {\em Continuous Univariate Distributions, Volume 1}.
\newblock Wiley, 1994.

\bibitem[KO23]{kothari2023mean}
Robin Kothari and Ryan O'Donnell.
\newblock Mean estimation when you have the source code; or, quantum {Monte
  Carlo} methods.
\newblock In {\em Proceedings of the 2023 Annual ACM-SIAM Symposium on Discrete
  Algorithms (SODA)}, pages 1186--1215. SIAM, 2023.

\bibitem[Man63]{mandelbrot1963}
B.~Mandelbrot.
\newblock The variation of certain speculative prices.
\newblock {\em Journal of Business}, 26:394--419, 1963.

\bibitem[Mon15]{montanaro2015quantum}
Ashley Montanaro.
\newblock {Quantum speedup of Monte Carlo methods}.
\newblock {\em Proceedings of the Royal Society A: Mathematical, Physical and
  Engineering Sciences}, 471(2181):20150301, 2015.

\bibitem[PT17]{pipiras_taqqu_2017}
Vladas Pipiras and Murad~S. Taqqu.
\newblock {\em Long-Range Dependence and Self-Similarity}.
\newblock Cambridge Series in Statistical and Probabilistic Mathematics.
  Cambridge University Press, 2017.

\bibitem[TTW95]{taqqu1995estimators}
Murad~S. Taqqu, Vadim Teverovsky, and Walter Willinger.
\newblock Estimators for long-range dependence: An empirical study.
\newblock {\em Fractals}, 03(04):785--798, 1995.

\bibitem[Wat19]{Watkins_2019}
N.~W. Watkins.
\newblock Mandelbrot's stochastic time series models.
\newblock {\em Earth and Space Science}, 6(11):2044--2056, 2019.

\end{thebibliography}

\end{document}